\algrenewcommand\algorithmicindent{1em}
\newcommand{\figref}[1]{Fig.~\ref{#1}}
\newcommand{\tabref}[1]{Table~\ref{#1}}
\renewcommand{\algref}[1]{Algorithm~\ref{#1}}
\newcommand{\lemref}[1]{Lemma~\ref{#1}}
\newcommand{\thmref}[1]{Theorem~\ref{#1}}
\newcommand{\appref}[1]{Appendix~\ref{#1}}
\newcommand{\secref}[1]{Section~\ref{#1}}
\newcommand{\R}{\mathbb{R}}
\newcommand{\Z}{\mathbb{Z}}
\newcommand{\N}{\mathbb{N}}
\newcommand{\Ecd}{\mathsf{Ecd}}
\newcommand{\Dcd}{\mathsf{Dcd}}
\newcommand{\Qtz}{\mathcal{Q}}
\newcommand{\diag}[1]{\mathop{\mathrm{diag}}\limits\left(#1\right)}
\renewcommand{\vec}{\mathop{\mathrm{vec}}\limits}
\newcommand{\bfzero}{\mathbf{0}}
\newcommand{\rank}{\mathop{\rm rank}\limits}
\newcommand{\KeyGen}{\mathsf{KeyGen}}
\newcommand{\Enc}{\mathsf{Enc}}
\newcommand{\Dec}{\mathsf{Dec}}
\newcommand{\Add}{\mathsf{Add}}
\newcommand{\Mult}{\mathsf{Mult}}
\newcommand{\Rotate}{\mathsf{Rotate}}
\newcommand{\Relin}{\mathsf{Relin}}
\newcommand{\pk}{\mathsf{pk}}
\newcommand{\sk}{\mathsf{sk}}
\newcommand{\rlk}{\mathsf{rlk}}
\newcommand{\ct}{\mathsf{ct}}
\newcommand{\sfx}{\mathsf{x}}
\newcommand{\sfA}{\mathsf{A}}
\newcommand{\sfB}{\mathsf{B}}
\newcommand{\sfC}{\mathsf{C}}
\newcommand{\sfE}{\mathsf{E}}
\newcommand{\sfF}{\mathsf{F}}
\newcommand{\Acl}{\sfA_{\mathrm{cl}}}
\newcommand{\Aclb}{\bar{\sfA}_{\mathrm{cl}}}
\newcommand{\tldd}{\tilde{d}}
\newcommand{\tldK}{\tilde{K}}
\newcommand{\bfa}{\mathbf{a}}
\newcommand{\bfs}{\mathbf{s}}
\newcommand{\bfe}{\mathbf{e}}
\newcommand{\bfu}{\mathbf{u}}
\newcommand{\bfp}{\mathbf{p}}
\newcommand{\bfr}{\mathbf{r}}
\newcommand{\bfm}{\mathbf{m}}
\newcommand{\bfc}{\mathbf{c}}
\newcommand{\round}[1]{\lfloor{#1}\rceil}
\newcommand{\floor}[1]{\lfloor{#1}\rfloor}
\newtheorem{definition}{Definition}
\newtheorem{lemma}{Lemma}
\newtheorem{theorem}{Theorem}
\newtheorem{remark}{Remark}
\let\MYcaption\@makecaption
\let\@makecaption\MYcaption
\def\BibTeX{{\rm B\kern-.05em{\sc i\kern-.025em b}\kern-.08em
    T\kern-.1667em\lower.7ex\hbox{E}\kern-.125emX}}
\begin{document}

\title{Input-Output History Feedback Controller for Encrypted Control with Leveled Fully Homomorphic Encryption}

\author{%
    Kaoru Teranishi, \IEEEmembership{Student Member, IEEE},
    Tomonori Sadamoto, \IEEEmembership{Member, IEEE}, \\
    and Kiminao Kogiso, \IEEEmembership{Member, IEEE}
    \thanks{%
        This work was supported by JSPS Grant-in-Aid for JSPS Fellows Grant Number JP21J22442 and JSPS KAKENHI Grant Number JP22H01509.
    }
    \thanks{%
        Kaoru Teranishi, Tomonori Sadamoto, and Kiminao Kogiso are with the Department of Mechanical and Intelligent Systems Engineering, The University of Electro-Communications, Chofu, Tokyo, 182-8585, Japan (e-mail: teranishi@uec.ac.jp, sadamoto@uec.ac.jp, kogiso@uec.ac.jp).
    }
    \thanks{%
        Kaoru Teranishi is also a Research Fellow of Japan Society for the Promotion of Science.
    }
}

\thispagestyle{empty}
\hspace{-4.5mm}
\fbox{
\begin{minipage}{\textwidth-5mm}\scriptsize
© 20XX IEEE.
Personal use of this material is permitted.
Permission from IEEE must be obtained for all other uses, in any current or future media, including reprinting/republishing this material for advertising or promotional purposes, creating new collective works, for resale or redistribution to servers or lists, or reuse of any copyrighted component of this work in other works.
\end{minipage}
}
\newpage
\setcounter{page}{0}

\maketitle
\thispagestyle{empty}

\begin{abstract}
Protecting the parameters, states, and input/output signals of a dynamic controller is essential for securely outsourcing its computation to an untrusted third party.
Although a fully homomorphic encryption scheme allows the evaluation of controller operations with encrypted data, an encrypted dynamic controller with the encryption scheme destabilizes a closed-loop system or degrades the control performance due to overflow.
This paper presents a novel controller representation based on input-output history data to implement an encrypted dynamic controller that operates without destabilization and performance degradation.
Implementation of this encrypted dynamic controller representation can be optimized via batching techniques to reduce the time and space complexities.
Furthermore, this study analyzes the stability and performance degradation of a closed-loop system caused by the effects of controller encryption.
A numerical simulation demonstrates the feasibility of the proposed encrypted control scheme, which inherits the control performance of the original controller at a sufficient level.
\end{abstract}

\begin{IEEEkeywords}
Cyber-physical system, cyber-security, encrypted control, homomorphic encryption, controller representation.
\end{IEEEkeywords}

\section{Introduction}

The development of cloud computing technologies has promoted outsourcing computation of resource-limited devices, as well as data storage.
Control as a service (CaaS) is a concept of cloud-based control that outsources decision-making and monitoring of controlled devices to remote servers, and it is introduced to several control systems, such as automation~\cite{Hegazy15}, robotics~\cite{Vick15}, and automobiles~\cite{Esen15}.
CaaS has the advantages of scalability and efficiency in terms of energy and costs.
However, such control induces security concerns that confidential information of control recipes and privacy of controlled devices can be exposed to an untrusted third party.

Homomorphic encryption~\cite{Acar18} is the major approach for establishing secure outsourcing computation while maintaining the confidentiality of a computation process by allowing the direct evaluation of mathematical arithmetic on encrypted data without decryption.
An encrypted control framework~\cite{Darup21} was introduced in the literature on control engineering to apply homomorphic encryption for several control operations~\cite{Farokhi17,Kogiso15,Teranishi19_1,Kim19_2,Fritz19,Suh21,Darup18_2,Darup19_1,Alexandru21,Zhang21}.
Encrypted control can reduce the vulnerabilities induced by CaaS because the controller parameters and control signals over networks are encrypted while a controller server does not have a decryption key.

\subsection{Problem of encrypting dynamic controller}\label{sec:problem}

Most of the studies on encrypted control considered the encryption of static or linear controllers by using additive or multiplicative homomorphic encryption~\cite{Darup21,Farokhi17,Kogiso15}, where the encrypted controller states were decrypted on the plant side at every sampling period.
However, encrypting dynamic controllers without temporary decryption remains challenging because it causes an overflow of the states to be encrypted.
The majority of homomorphic encryption schemes operate using integers rather than real numbers, and thus controller states and other signals should be quantized before encryption.
To deal with this quantization, some studies used a binary representation of fractional numbers~\cite{Farokhi17,Darup19_1} or rounding of real numbers to the nearest element in a plaintext space after scaling~\cite{Kogiso15,Teranishi19_3}.
The former increases the number of bits for the representation by recursively updating the controller states with homomorphic operations~\cite{Murguia20_1}.
It is also inevitable for the latter that a value of the scaling parameter is accumulated for every homomorphic operation.
These effects induce overflow when a dynamic controller is naively encrypted because a plaintext space is a finite set.
Once overflow occurs, the decrypted state might be significantly different from the correct value, thereby easily inducing instability of the encrypted control systems.
Note that although some homomorphic encryption schemes, such as Cheon-Kim-Kim-Song (CKKS) encryption~\cite{Cheon17}, support floating-point number computation, such schemes include a quantization process with scaling factors in their encryption algorithms and rescaling to manage the factors.
The rescaling leads to overflow by recursive computation because it decreases the size of a ciphertext modulus along with resetting the accumulation of the factors.
Hence, quantization effects cannot be avoided in encrypted control systems.

To overcome this problem, the authors of~\cite{Murguia20_1} proposed a controller that resets its states at a constant period to clear an increase of the number of bits.
However, such reset operation obviously degrades the control performance.
Another approach in~\cite{Kim16} employed fully homomorphic encryption for encrypting dynamic controllers.
Fully homomorphic encryption enables to evaluate any arithmetic operations over a ciphertext space.
Thus, encrypted dynamic controllers can be implemented using fully homomorphic encryption because the accumulation of scaling parameters can be removed by division over a ciphertext space.
However, fully homomorphic encryption requires bootstrapping, which is a technique used to manage noise in a ciphertext.
Bootstrapping requires a large amount of computation time and resources to be performed, and therefore the realization of such encrypted control in real time is difficult in practice.
Note that, in fully homomorphic encryption, a small noise is injected into a ciphertext to guarantee security.
This noise grows every homomorphic operation, and the decryption result includes a large error when the noise reaches a certain size.
Hence, bootstrapping is essential for correct computation while ensuring security.

Recent studies~\cite{Kim22,Kim20_2} reformulated a dynamic controller by pole placement and similarity transformation so that its system matrix becomes an integer matrix.
This transformation makes quantization of the system matrix unnecessary, and thus the scaling parameter for the controller states does not accumulate.
Moreover, these studies regarded the effect of injected noises in Gentry-Sahai-Waters (GSW) encryption~\cite{Gentry13} as an external disturbance.
When the dynamic controller is stable under the disturbance, bootstrapping for the encryption scheme was found to be unnecessary for the infinite-time-horizon operation of the controller.
Although this approach is promising for implementing an encrypted dynamic controller without decrypting the controller states, it requires many computational resources to store and compute over a larger number of ciphertexts in the encrypted controller implementation.
To the best of our knowledge, the available fully homomorphic encryption in the approach is limited to the GSW encryption because the approach depends on the properties of the encryption scheme to manage the injected noises.
Furthermore, the reformulation of large-dimensional controllers sometimes fails due to the numerical instability of pole placement.

\subsection{Threat model and control system architecture}

This study considers the cloud-based control system shown in \figref{fig:setting}, where $\Enc$ and $\Dec$ are an encryptor and a decryptor, respectively.
The operator and sensor transmit a reference and plant output to the encrypted controller, respectively, while the signals are encrypted by the encryptors.
The encrypted controller computes a control input and returns it to the plant.
The decryptor decrypts the control input, and the actuator drives the plant using the decrypted control input.
Note that the plant has a public and secret key.
In contrast, the operator and encrypted controller have public and relinearization keys, respectively.
A relinearization key is a type of public key; its details are described later.
We assume an eavesdropper exists, and the cloud is honest-but-curious.
The eavesdropper aims to steal confidential information of the system by eavesdropping on signals over network links.
The honest-but-curious cloud tries to disclose some information from obtained data while following a correct protocol.

We employ a leveled fully homomorphic encryption scheme to protect the control system against the adversaries.
The encryption scheme can conceal both controller parameters and signals as opposed to additive and multiplicative homomorphic encryption schemes and does not require bootstrapping different from fully homomorphic encrytpion.
Additionally, some schemes based on ring learning with errors (RLWE)~\cite{Lyubashevsky10} provide a batching technique, which is beneficial for efficient computations.
Because of the advantages, some recent studies have used leveled fully homomorphic encryption to realize data-driven control~\cite{Alexandru20_2,Alexandru20_3} and reinforcement learning~\cite{Suh21}.
Furthermore, it is utilized for computing matrix multiplication to perform secure inference of machine learning~\cite{Gilad-Bachrach16,Hesamifard17,Juvekar18,Bourse18,Jiang18}.

\begin{figure}[!t]
    \centering
    \includegraphics[scale=1]{./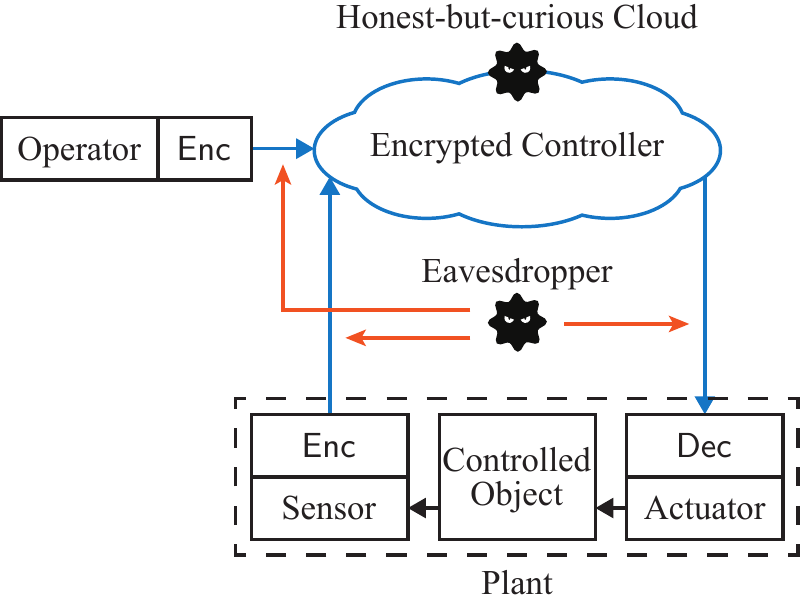}
    \caption{%
    Cloud-based encrypted control system under adversaries.
    The blue arrows are encrypted channels, and the red arrows illustrate eavesdropping attacks.
    }%
    \label{fig:setting}
    \vspace{-5mm}
\end{figure}

\subsection{Contribution}

This study contributes to establishing a general efficient framework for the secure outsourcing computation of dynamic controllers.
The intrinsic difficulty of encrypting a dynamic controller stems from the hardness of the recursive update of the encrypted controller states without decryption.
To solve this difficulty, we propose a novel controller representation using input/output history data of a controller to eliminate the controller states.
Furthermore, we clarify a condition so that an encrypted control system based on the proposed framework inherits the stability of the original control system under quantization effects due to encryption.
Additionally, the degree of performance degradation in the worst case is guaranteed as an upper bound of the output trajectory error between the encrypted and unencrypted control systems.
The feasibility of the proposed framework is demonstrated through a numerical simulation with a decentralized PI controller for a practical tank system~\cite{Johansson00}.

In contrast to the methods of~\cite{Kim22,Kim20_2}, our framework can be combined with any leveled fully homomorphic encryption because the proposed controller representation can solve the overflow and injected-noise management problems regardless of an employed encryption scheme.
According to this property, the batching technique of RLWE-based encryption can be easily used to reduce space and time complexities compared with the methods.
Additionally, transformation to the proposed controller representation is numerically stable even though the dimension of the original controller is large.

\subsection{Outline}

The remainder of this paper is organized as follows.
\secref{sec:preliminaries} introduces the notations and leveled fully homomorphic encryption used in this paper.
\secref{sec:representation} presents a novel controller representation using the input-output  history data for encrypted control.
\secref{sec:encryption} proposes an encrypted control scheme that operates for an infinite time horizon without overflow using the novel controller representation.
\secref{sec:analysis} analyzes the effects of quantization by encrypting the controller for stability and performance degradation of a closed-loop system.
\secref{sec:simulation} shows the results of a numerical simulation.
\secref{sec:conclusion} describes the conclusions of this study and future work.

\section{Preliminaries}\label{sec:preliminaries}

\subsection{Notation}\label{sec:notation}

The sets of real numbers, integers, and natural numbers are denoted by $\R$, $\Z$, and $\N$, respectively.
For $n>1$, $\Z_{n}$ denotes the set of integers $\{z\in\Z\mid -n/2< z\le n/2\}$.
A polynomial ring $R$ is defined as $R\coloneqq\Z[X]/(X^{N}+1)$, where $N$ is a power of $2$.
The set of polynomials in $R$ with coefficients in $\Z_{n}$ is denoted by $R_{n}$.
For $x\in\R$, the symbol $\round{\cdot}$ is defined by $\round{x}=\floor{x+1/2}$, where $\floor{\cdot}$ is the floor function.
The minimal residue of $z\in\Z$ modulo $n$ is denoted by $[z]_{n}$.
Similarly, $[\bfa]_{n}$ denotes the polynomial in $R_{n}$ given by applying $[\cdot]_{n}$ to all coefficients of $\bfa\in R$.
The sets of $n$-dimensional real-valued vectors and $m$-by-$n$ real-valued matrices are denoted by $\R^{n}$ and $\R^{m\times n}$, respectively.
The $n$-dimensional zero row vector and $m$-by-$n$ zero matrix are denoted by $\bfzero_{n}$ and $O_{m\times n}$, respectively.
The $i$th element of vector $v\in\R^{n}$ is denoted by $v_{i}$.
The $(i,j)$ entry of matrix $M\in\R^{m\times n}$ is denoted by $M_{ij}$.
The $\ell_{2}$ norm of $v$ and the induced $2$-norm of $M$ are denoted by $\|v\|$ and $\|M\|$, respectively.
The vectorization of $M$ is defined by $\vec(M)\coloneqq[M_{1}^{\top}\cdots M_{n}^{\top}]^{\top}$, where $M_{i}$ is the $i$th column vector of $M$, and $M_{i}^{\top}$ is the transpose matrix of $M_{i}$.
The spectral radius of $M$ is denoted by $\rho(M)$.

\subsection{Brakerski/Fan-Vercauteren encryption scheme}

This section provides an overview of the Brakerski/Fan-Vercauteren (BFV) leveled fully homomorphic encryption scheme~\cite{Fan12} used in this study.
The plaintext space of the BFV scheme is a polynomial ring, which is beneficial for the effective encryption/decryption of vector data.
The details of the security and algorithms are described in \appref{app:BFV}.

The BFV scheme consists of algorithms $\KeyGen$, $\Enc$, $\Dec$, $\Add$, and $\Mult$.
The key generation algorithm $\KeyGen(\lambda)$ takes a security parameter (i.e., key length) $\lambda\in\N$ and outputs a public key $\pk\in R_{Q}^{2}$, secret key $\sk\in R_{2}$, and relinearization key $\rlk\in R_{Q}^{2}$.
The public and secret keys are respectively used to encrypt a plaintext and decrypt a ciphertext.
The relinearization key is to be published and is required for the process of relinearization.
The encryption algorithm $\Enc(\pk,\bfm)$ takes a public key $\pk$ and a plaintext $\bfm\in R_{T}$ and outputs a ciphertext $\ct\in R_{Q}^{2}$.
Conversely, the decryption algorithm $\Dec(\sk,\ct)$ takes a secret key $\sk$ and a ciphertext $\ct$ and outputs a plaintext $\bfm$.
The ciphertexts of the BFV scheme must be correctly decrypted for all plaintexts in $R_{T}$ with valid parameters, namely $\Dec(\sk,\Enc(\pk,\bfm))=\bfm$, $\forall\bfm\in R_{T}$ for some $\lambda$ and for any $\pk$ and $\sk$ generated by $\KeyGen(\lambda)$.

The BFV scheme enables evaluation of both addition and multiplication over the ciphertext space.
The number of the evaluation depends on a security parameter $\lambda$.
The addition algorithm $\Add(\ct_{1},\ct_{2})$ takes ciphertexts $\ct_{1}=\Enc(\pk,\bfm_{1})$ and $\ct_{2}=\Enc(\pk,\bfm_{2})$ and satisfies $\Dec(\sk,\Add(\ct_{1},\ct_{2}))=[\bfm_{1}+\bfm_{2}]_{T}$.
Similarly, the multiplication algorithm $\Mult(\ct_{1},\ct_{2})$ computes the homomorphic multiplication.
Note that the number of elements in a ciphetext $\ct_{3}=\Mult(\ct_{1},\ct_{2})$ is three, namely $\ct_{3}\in R_{Q}^{3}$.
Hence, we cannot compute $\Add(\ct_{1},\ct_{3})$ and $\Add(\ct_{2},\ct_{3})$ because $\ct_{1}$ and $\ct_{2}$ are in $R_{Q}^{2}$.
The relinearization algorithm $\Relin(\rlk,\ct_{3})$ takes a relinearization key $\rlk$ and a ciphertext $\ct_{3}=\Mult(\ct_{1},\ct_{2})$ and outputs a ciphertext having two elements.
We assume that relinearization is performed after every homomorphic multiplication, and then $\Dec(\sk,\Mult(\ct_{1},\ct_{2}))=[\bfm_{1}\bfm_{2}]_{T}$ holds.
The security of the BFV scheme is based on the RLWE problem.
The problem is believed to be computationally hard to solve, even when a quantum computer is used~\cite{Lyubashevsky10}.
In the following, we omit the keys of the arguments for simplicity.

\subsection{Encoding and batching}

Although the controller parameters and signals are real-number matrices and vectors, respectively, the plaintext space of the BFV scheme is a polynomial ring with coefficients of integer modulo $T$.
Hence, the matrices and vectors should be converted into a polynomial before encryption.
To this end, we first consider the following encoder and decoder to convert a real number into an integer modulo $T$, and vice versa: $\Ecd_{\Delta}:\R\to\Z_{T}:x\mapsto[\round{x/\Delta}]_{T}$, $\Dcd_{\Delta}:\Z_{T}\to\R:z\mapsto\Delta z$, where $\Delta>0$ is a sensitivity for tuning the rounding errors caused by the encoding process, and the encoder and decoder operate for each element of vectors and matrices.
With proper sensitivity, the elements of the matrices and vectors can be encoded and decoded with the desired precision.
The quantization effects for stabilty and control performance are analyzed later.

Next, we consider transforming matrices and vectors of integer modulo $T$ to the corresponding polynomials.
One possible way for the transformation is to regard an element of the matrices and vectors as a polynomial of degree zero.
This trivial transformation does not require additional computation processes.
However, it is not efficient from the perspective of the total computation cost because an $m$-by-$n$ matrix or an $n$-dimensional vector have the same number of polynomials to represent the plaintext, namely $mn$ or $n$ polynomials.
The encryption and decryption algorithms should perform all polynomials, and thus a large amount of computation time and resources are required.

This study employs a batching technique for efficient transformation from matrices and vectors to polynomials.
Batching based on the Chinese remainder theorem (CRT) is a technique used in RLWE-based encryption to pack multiple integers into a single polynomial plaintext.
The CRT batching is effective for accelerating the computation of cryptosystems by allowing single instruction/multiple data (SIMD) operations for homomorphic evaluation~\cite{Smart14}.
The remainder of this section describes the CRT batching for the BFV scheme.

Suppose $T$ is a prime such that $T=1\bmod 2N$, where $N$ is a power of $2$ used for defining a polynomial ring $R$ in \secref{sec:notation}.
From the CRT, we have the ring isomorphism $R_{T}=\Z_{T}[X]/(X-\zeta)(X-\zeta^{3})\cdots(X-\zeta^{2N-1})\cong\Z_{T}[X]/(X-\zeta)\times\cdots\times\Z_{T}[X]/(X-\zeta^{2N-1})\cong\Z_{T}[\zeta]\times\cdots\times\Z_{T}[\zeta^{2N-1}]\cong\Z_{T}^{N}$~\cite{Chen17}, where $\zeta$ is the primitive $2N$th root of unity, that is, $\zeta^{2N}=1\bmod T$ and $\zeta^{i}\neq 1\bmod T$ for $0<i<2N$.
The CRT batching is constructed based on this isomorphism.
A canonical map from $R_{T}$ to $\Z_{T}^{N}$ is given as
\begin{equation}
    \bfm(X)\mapsto[\bfm(\zeta)\ \ \bfm(\zeta^{3})\ \ \cdots\ \ \bfm(\zeta^{2N-1})].
    \label{eq:dcd_CRT}
\end{equation}
The map \eqref{eq:dcd_CRT} can be represented by the nega-cyclic number theoretic transform (NTT) as follows~\cite{Longa16}:
\begin{align}
    &\sigma^{-1}:R_{T}\to\Z_{T}^{N}:\bfm=\sum_{i=0}^{N-1}m_{i}X^{i}\mapsto[z_{1}\ \cdots\ z_{N}], \label{eq:dcd_NTT} \\
    &\!\!\!\!\!\begin{bmatrix}
        z_{1} \\
        z_{2} \\
        \vdots \\
        \!z_{N}\!
    \end{bmatrix}\!\!=\!\!
    \left[\!
    \begin{bmatrix}
        1 & 1 & \cdots & 1 \\
        1 & \omega & \cdots & \omega^{N-1} \\
        \vdots & \vdots & \ddots & \vdots \\
        1 & \omega^{N-1} & \cdots & \omega^{(N-1)^{2}} \\
    \end{bmatrix}\!\!\!
    \left(\!
    \begin{bmatrix}
        1 \\
        \zeta \\
        \vdots \\
        \!\zeta^{N-1}\!
    \end{bmatrix}\!\!\odot\!\!
    \begin{bmatrix}
        m_{0} \\
        m_{1} \\
        \vdots \\
        \!m_{N-1}\!
    \end{bmatrix}
    \!\right)
    \!\!\right]_{\!T}\!\!, \nonumber
\end{align}
where $\odot$ denotes the Hadamard product, and $\omega$ is the primitive $N$th root of unity.
Additionally, the inverse transformation of \eqref{eq:dcd_NTT} with the inverse NTT is given as
\begin{align}
    &\sigma:\Z_{T}^{N}\to R_{T}:[z_{1}\ \cdots\ z_{N}]\mapsto\bfm=\sum_{i=0}^{N-1}m_{i}X^{i}, \label{eq:ecd_NTT} \\
    &\!\!\!\!\begin{bmatrix}
        m_{0} \\
        m_{1} \\
        \vdots \\
        \!m_{N-1}\!
    \end{bmatrix}\!\!=\!\!
    \left[\!
    \begin{bmatrix}
        1 \\
        \xi \\
        \vdots \\
        \!\xi^{N-1}\!
    \end{bmatrix}\!\!\odot\!\!
    \left(\!\!
    \cfrac{1}{N}\!
    \begin{bmatrix}
        1 & 1 & \cdots & 1 \\
        1 & \pi & \cdots & \pi^{N-1} \\
        \vdots & \vdots & \ddots & \vdots \\
        1 & \pi^{N-1} & \cdots & \pi^{(N-1)^{2}} \\
    \end{bmatrix}\!\!\!\!
    \begin{bmatrix}
        z_{1} \\
        z_{2} \\
        \vdots \\
        \!z_{N}\!
    \end{bmatrix}
    \!\right)
    \!\!\right]_{\!T}\!\!, \nonumber
\end{align}
where $\xi=[\zeta^{-1}]_{T}$ and $\pi=[\omega^{-1}]_{T}$.
Consequently, we obtain the map \eqref{eq:ecd_NTT} for packing multiple integers into a single polynomial and the map \eqref{eq:dcd_NTT} for unpacking the polynomial.

\section{Input-Output History Feedback Controller}\label{sec:representation}

This section proposes a novel controller representation to implement an encrypted dynamic controller.
The proposed scheme represents any linear time-invariant controller without controller states, instead of using the history of inputs and outputs of the controller.

Given a discrete-time system
\begin{equation}
    \left\{
    \begin{aligned}
        x_{t+1}&=A_{p}x_{t}+B_{p}u_{t}+w_{t}, \\
        y_{t}&=C_{p}x_{t}+v_{t},
    \end{aligned}
    \right.
    \label{eq:plant}
\end{equation}
where $t\in\N$ is a time, $x\in\R^{n}$ is a state, $u\in\R^{m}$ is an input, $y\in\R^{\ell}$ is an output, $w\in\R^{n}$ is a process noise, and $v\in\R^{\ell}$ is a measurement noise.
$(A_{p},B_{p})$ and $(A_{p},C_{p})$ are controllable and observable, respectively.
This study considers encrypting the following linear time-invariant controller based on history data to control the plant \eqref{eq:plant}:
\begin{equation}
    \left\{
    \begin{aligned}
        z_{t+1}&=Az_{t}+By_{t}+Er_{t}, \\
        u_{t}&=Cz_{t}+Dy_{t}+Fr_{t},
    \end{aligned}
    \right.
    \label{eq:controller}
\end{equation}
where $z\in\R^{p}$ is a controller state, and $r\in\R^{q}$ is a reference input.
We assume that the pair $(A,C)$ is observable without loss of generality.
If the pair is not observable, the controller can be reconstructed as minimal realization.
This section presents another representation of \eqref{eq:controller} without using its state $z$, called the \textit{input-output history feedback controller (IOHFC) representation}, in order to enable an encrypted controller of \eqref{eq:controller} to operate for an infinite time horizon without overflow.

Let $[d_{k}]_{t_{2}}^{t_{1}}\coloneqq[d_{t_{1}}^{\top}\ \cdots\ d_{t_{2}}^{\top}]^{\top}$ be a stacked vector of time-series data $d_{k}$ for $t_{1}\le k\le t_{2}$.
With this notation, the following equations are obtained from \eqref{eq:controller}:
\begin{align}
    z_{t}&=A^{L}z_{t-L}+R_{L}[y_{k}]_{t-1}^{t-L}+S_{L}[r_{k}]_{t-1}^{t-L}, \label{eq:stacked_state} \\
    [u_{k}]_{t-1}^{t-L}&=V_{L}z_{t-L}+H_{L}[y_{k}]_{t-1}^{t-L}+J_{L}[r_{k}]_{t-1}^{t-L}, \label{eq:stacked_input}
\end{align}
where $L>0$ is a data length, $V_{L}\coloneqq[C^{\top}\ \cdots\ (CA^{L-1})^{\!\top}]^{\!\top}$ $\in\R^{Lm\times p}$, $R_{L}\coloneqq[A^{L-1}B\ \cdots\ B]\in\R^{p\times L\ell}$, $S_{L}\coloneqq[A^{L-1}E\ \cdots\ E]\in\R^{p\times Lq}$, and
\begin{align*}
    H_{L}&\coloneqq
    \begin{bmatrix}
        D         & O         & \cdots & O      & O \\
        CB        & D         & \cdots & O      & O \\
        \vdots    & \vdots    & \ddots & \vdots & \vdots \\
        CA^{L-2}B & CA^{L-1}B & \cdots & CB     & D
    \end{bmatrix}\in\R^{Lm\times L\ell}, \\
    J_{L}&\coloneqq
    \begin{bmatrix}
        F         & O         & \cdots & O      & O \\
        CE        & F         & \cdots & O      & O \\
        \vdots    & \vdots    & \ddots & \vdots & \vdots \\
        CA^{L-2}E & CA^{L-1}E & \cdots & CE     & F
    \end{bmatrix}\in\R^{Lm\times Lq}.
\end{align*}
Assume that $L\ge p$ is chosen to satisfy $\rank V_{L}=p$.
Then, there exists the Moore-Penrose inverse $V_{L}^{+}$ of $V_{L}$ such that $V_{L}^{+}V_{L}=I$ because $V_{L}$ is full column rank.
Thus, it follows from \eqref{eq:stacked_input} that
\begin{equation}
    z_{t-L}=V_{L}^{+}[u_{k}]_{t-1}^{t-L}-V_{L}^{+}H_{L}[y_{k}]_{t-1}^{t-L}-V_{L}^{+}J_{L}[r_{k}]_{t-1}^{t-L}. \label{eq:x_tL}
\end{equation}
By substituting \eqref{eq:x_tL} into \eqref{eq:stacked_state}, the controller state $z$ at time $t$ can be represented as
\begin{align*}
    z_{t}&\!=\!(S_{L}-A^{L}V_{L}^{+}J_{L})[r_{k}]_{t-1}^{t-L}+(R_{L}-A^{L}V_{L}^{+}H_{L})[y_{k}]_{t-1}^{t-L} \\
    &\quad+A^{L}V_{L}^{+}[u_{k}]_{t-1}^{t-L}.
\end{align*}
Hence, the input $u$ can be computed as follows.
\begin{equation}
    u_{t}=Kd_{t},
    \label{eq:controller_data}
\end{equation}
where
\begin{align*}
    K&\coloneqq
    \Bigl[C(S_{L}-A^{L}V_{L}^{+}J_{L})\ \ F\ \ C(R_{L}-A^{L}V_{L}^{+}H_{L}) \\
    &\quad\quad D\ \ CA^{L}V_{L}^{+}\Bigr], \\
    d _{t}&\coloneqq
    \begin{bmatrix}
        ([r_{k}]_{t}^{t-L})^{\top} & ([y_{k}]_{t}^{t-L})^{\top} & ([u_{k}]_{t-1}^{t-L})^{\top}
    \end{bmatrix}^{\top}.
\end{align*}
The controller \eqref{eq:controller_data} is another representation of the controller \eqref{eq:controller} based on the history data and current output/reference, and their control inputs $u_{t}$ are the same after $L$ samples.
Consequently, we obtain the following theorem.

\begin{theorem}\label{theorem:iohfc}
    For the linear time-invariant controller \eqref{eq:controller}, there exists an IOHFC \eqref{eq:controller_data} such that its control input exactly matches that of \eqref{eq:controller} for all $t\ge L$.
    Furthermore, if $z_{t}=0$ for all $t\le 0$, the control inputs of \eqref{eq:controller} and \eqref{eq:controller_data} are identical for all $t\ge 0$.
\end{theorem}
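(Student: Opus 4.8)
The derivation preceding the statement already carries most of the weight, so the plan is chiefly to organize it into the two claimed regimes. First I would unroll the controller recursion \eqref{eq:controller} exactly $L$ steps to obtain the stacking identities \eqref{eq:stacked_state} and \eqref{eq:stacked_input}; these are purely algebraic and valid at any time at which the referenced signals are defined. Since $(A,C)$ is observable and $L\ge p$, the observability matrix $V_L$ has full column rank $p$, so its Moore-Penrose inverse obeys the left-inverse identity $V_L^+V_L=I$. Applying $V_L^+$ to \eqref{eq:stacked_input} isolates $z_{t-L}$ as in \eqref{eq:x_tL}, substituting into \eqref{eq:stacked_state} expresses $z_t$ purely through past inputs, outputs, and references, and inserting the result into $u_t=Cz_t+Dy_t+Fr_t$ collects into $u_t=Kd_t$. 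For the first claim it then suffices to note that when $t\ge L$ every index $t-L,\dots,t$ appearing in $d_t$ is nonnegative, so $d_t$ consists of genuine closed-loop data and the identity reproduces the original input exactly.

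The second claim covers the transient range $0\le t<L$, where $d_t$ references signals at negative times. Here I would extend $y_s,u_s,r_s$ to $s<0$ by zero-padding and invoke the hypothesis $z_s=0$ for all $s\le 0$. The step to verify is that this extension is consistent with \eqref{eq:controller}: with all signals and states vanishing for $s<0$, the state and output equations hold trivially, while $z_0=0$ is compatible with the recursion. Therefore \eqref{eq:stacked_state} and \eqref{eq:stacked_input} continue to hold for every $t\ge 0$, and since $z_{t-L}=0$ whenever $t\le L$, the same substitution delivers $u_t=Kd_t$ on the transient as well.

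The main obstacle is conceptual rather than computational. Because $V_L^+$ is only a left inverse, \eqref{eq:x_tL} is legitimate solely because the stacked input in \eqref{eq:stacked_input} lies exactly in the affine set $V_Lz_{t-L}+(\text{known terms})$; applying $V_L^+$ then returns $z_{t-L}$ through $V_L^+V_L=I$ irrespective of the column space of $V_L$. For $0\le t<L$ I must confirm that zero-padding preserves this exact consistency, which follows from the block lower-triangular Toeplitz structure of $H_L$ and $J_L$: the row fixing each $u_s$ involves only $y_k$ and $r_k$ with $k\le s$, so the padded rows with $s<0$ are satisfied identically while the rows with $s\ge 0$ recover the true $u_s$.
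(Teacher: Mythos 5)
Your proposal is correct and follows essentially the same route as the paper, which proves the theorem by the derivation preceding it: unrolling \eqref{eq:controller} into \eqref{eq:stacked_state}--\eqref{eq:stacked_input}, using $\rank V_{L}=p$ to left-invert and eliminate $z_{t-L}$, and substituting back to obtain $u_{t}=Kd_{t}$. Your added checks --- that $V_{L}^{+}$ acts as a genuine left inverse because the stacked data is exactly consistent, and that zero-padding for $t<L$ is compatible with the recursion via the lower-triangular structure of $H_{L}$ and $J_{L}$ --- are details the paper leaves implicit, and they strengthen rather than alter the argument.
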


A schematic picture of the IOHFC is illustrated in \figref{fig:controller_data}.
The controller has three queues of lengths $L+1$ and $L$ to store history data.
A reference input $r$ and output $y$ are respectively transmitted to the controller from an operator and a plant at time $t$ and added to the back end of the queues.
Then, the control input $u$ of the controller \eqref{eq:controller_data} is simply computed by the product between a controller gain $K$ and a data vector $d$ constructed from the history data.
The control input is returned to the plant and appended to the back end of the queue simultaneously.

\begin{remark}
    The proposed controller representation is a form of a vector autoregressive model with exogenous variables.
    Some studies have used the model to represent dynamical systems~\cite{Jansson03,Chiuso07}.
    However, to the best of our knowledge, few studies have applied the model to dynamic feedback controllers.
    \thmref{theorem:iohfc} reveals that a dynamic controller can be represented by history data instead of the controller state without performance degradation.
\end{remark}

\begin{remark}
    The IOHFC representation can be applied to not only controllers but also any linear time-invariant systems.
    For example, the Kalman filter for \eqref{eq:plant}, $\hat{x}_{t+1}=(A_{p}-GC_{p})\hat{x}_{t}+B_{p}u_{t}+Gy_{t}$, $\hat{y}_{t}=C_{p}\hat{x}_{t}$, can be representated as $\hat{y}_{t}=\Sigma[([y_{k}]_{t}^{t-L})^{\top}\ ([u_{k}]_{t-1}^{t-L})^{\top}]^{\top}$, where $\hat{x}$ is an estimated state, $\hat{y}$ is an estimated output, $G$ is a Kalman gain, and $\Sigma$ is an appropriate matrix obtained by the IOHFC representation.
    Hence, secure outsourcing computation of forecasting, filtering, and sensor fusion associated with dynamics can also be realized by using the proposed representation.
\end{remark}

\begin{figure}[!t]
    \centering
    \includegraphics[scale=1]{./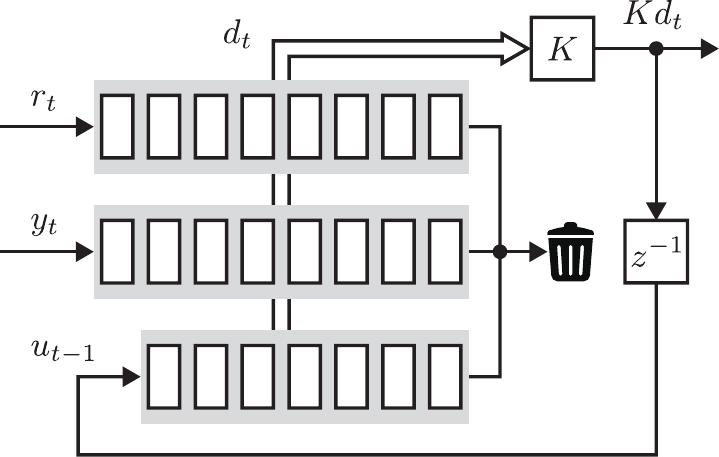}
    \caption{Schematic picture of the IOHFC ($L=7$).}
    \label{fig:controller_data}
    \vspace{-5mm}
\end{figure}

\section{Encrypting IOHFC}\label{sec:encryption}

This section provides an efficient matrix-vector multiplication algorithm over encrypted data using the encoding and batching technique in \secref{sec:preliminaries}.
Furthermore, an algorithm for the implementation of an encrypted IOHFC is proposed based on the matrix-vector multiplication algorithm to allow the controller to operate for an infinite time horizon without overflow.

\subsection{Matrix-vector multiplication by SIMD operations}

We present a method for secure matrix-vector multiplication by SIMD operations~\cite{Gilad-Bachrach16,Kim18}.
With the CRT batching, element-wise addition and multiplication between two vectors $v_{1},v_{2}\in\Z_{T}^{N}$ can be evaluated by the SIMD operations over the ciphertext space as follows:
\begin{align*}
    \sigma^{-1}\circ\Dec\circ\Add(\Enc\circ\sigma(v_{1}),\Enc\circ\sigma(v_{2}))&=[v_{1}+v_{2}]_{T}, \\
    \sigma^{-1}\circ\Dec\circ\Mult(\Enc\circ\sigma(v_{1}),\Enc\circ\sigma(v_{2}))&=[v_{1}\odot v_{2}]_{T}.
\end{align*}
Moreover, the BFV scheme with the CRT batching allows permutations of plaintext slots, that is, the positions of the elements of an integer vector, by using the Galois automorphism sending a polynomial $\bfm(X)\in R_{T}$ to $\bfm(X^{2i-1})$~\cite{Chen17}.
We define this permutation of shifting one slot to the left as $\Rotate:R_{Q}^{2}\to R_{Q}^{2}:\ct\mapsto\ct'$, and it satisfies $\sigma^{-1}\circ\Dec\circ\Rotate\circ\Enc\circ\sigma([z_{1}\ z_{2}\ \cdots\ z_{N}])=[z_{2}\ \cdots\ z_{N}\ z_{1}]$, where $z_{i}\in\Z_{T}$ for $1\le i\le N$.

\algref{alg:mat_mult} is the method combined with $\Add$, $\Mult$, and $\Rotate$ to compute multiplication between a matrix and vector, and \figref{fig:mat_mult} is the illustration of the algorithm.
The matrix $M$ is embedded in the first $mn$ elements of the temporary row vector $z_{1}$ such that each row of the matrix lines up (line $2$).
Similarly, the vector $v$ is copied $m$ times, and then the vectors are also embedded in $z_{2}$.
These processes are shown in \figref{fig:mat_mult}(b).
The temporary vectors are packed into single polynomials and encrypted (line $4$).
Using the SIMD operations, each element of the vectors is multiplied over the ciphertext space, and then the resultant vector is added with the rotation of itself $n-1$ times (lines $6$--$9$).
The computed ciphertext is decrypted and unpacked to $z_{3}$, and the $(i+1)$th elements of $z_{3}$ for $0\le i\le (m-1)n$ are extracted to construct the target vector $[Mv]_{T}$ (lines $11$--$12$).
The SIMD operations of the vectors and the construction of the target vector are shown in \figref{fig:mat_mult}(c).

We employ $\Enc_{\Delta}\coloneqq\Enc\circ\sigma\circ\Ecd_{\Delta}$ and $\Dec_{\Delta}\coloneqq\Dcd_{\Delta}\circ\sigma^{-1}\circ\Dec$ for the encryption and decryption of a real-valued vector in the following.

\begin{figure}[!t]
    \begin{algorithm}[H]
        \caption{Secure matrix-vector multiplication by SIMD operations.}
        \label{alg:mat_mult}
        \begin{algorithmic}[1]
            \Require $M\in\Z_{T}^{m\times n}$, $v\in\Z_{T}^{n}$
            \Ensure $[Mv]_{T}$
            \State Let $z_{1}$, $z_{2}$, and $z_{3}$ be row vectors in $\Z_{T}^{N}$
            \State $z_{1}\gets[\vec(M^{\top})^{\top}\ \bfzero_{N-mn}]$, $z_{2}\gets[v^{\top}\ \cdots\ v^{\top}\ \bfzero_{N-mn}]$
            \State \# Encryption
            \State $\ct_{1}\gets\Enc\circ\sigma(z_{1})$, $\ct_{2}\gets\Enc\circ\sigma(z_{2})$
            \State \# Multiplication over the ciphertext space
            \State $\ct_{3}\gets\Mult(\ct_{1},\ct_{2})$
            \While{$n-1$ times}
                \State $\ct_{3}\gets\Add(\ct_{3},\Rotate(\ct_{3}))$
            \EndWhile
            \State \# Decryption
            \State $z_{3}\gets\sigma^{-1}\circ\Dec(\ct_{3})$
            \State \Return $[z_{3,1}\ z_{3,n+1}\ \cdots\ z_{3,(m-1)n+1}]^{\top}$
        \end{algorithmic}
    \end{algorithm}
    \vspace{-10mm}
\end{figure}

\begin{figure*}[!t]
    \centering
    \includegraphics[scale=1]{./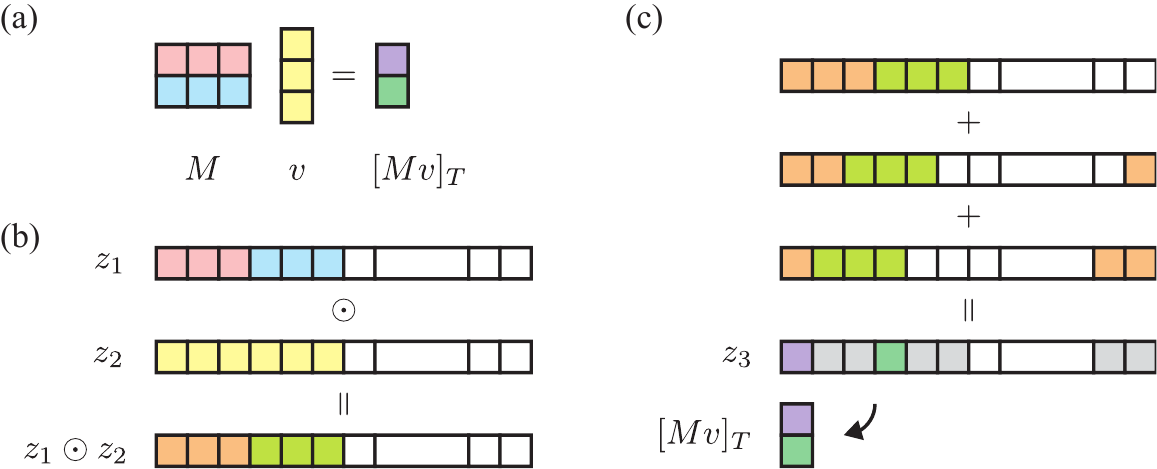}
    \caption{Illustration of secure matrix-vector multiplication over the ciphertext space. (a) Matrix $M\in\Z_{T}^{2\times3}$, vector $v\in\Z_{T}^{3}$, and target vector $[Mv]_{T}\in\Z_{T}^{2}$. (b) The elements of $M$ and $v$ are embedded in corresponding $N$-dimensional vectors $z_{1}$ and $z_{2}$, respectively. The white boxes of the vectors contain zero. Element-wise multiplication $z_{1}\odot z_{2}$ between the vectors is computed. (c) The computed vector is added with the rotation of itself three times to obtain $z_{3}$. The gray boxes of $z_{3}$ are wasted data. The target vector is constructed from the first and forth elements of $z_{3}$.}
    \label{fig:mat_mult}
    \vspace{-3mm}
\end{figure*}

\subsection{Encrypted IOHFC with input re-encryption}

Encrypting the IOHFC \eqref{eq:controller_data} may appear to be straightforward because leveled fully homomorphic encryption enables the evaluation of both multiplication and addition over a ciphertext space.
However, the IOHFC cannot be directly implemented in an encrypted fashion, even though \algref{alg:mat_mult} is used.
This is because the history data $d$ cannot be updated recursively when the vector is encrypted to a single ciphertext.
One may think that history data can be updated by multiplying a masking vector to the ciphertext of $d$ and rotating the masked ciphertext.
Unfortunately, this approach causes overflow due to the increase of ciphertext noise because the masking vector must be multiplied every sampling time.
Moreover, the ciphertext cannot be altered into another one corresponding to an updated plaintext vector without decryption.

To overcome this problem, this study splits the controller gain of the IOHFC \eqref{eq:controller_data} into block matrices for each time step as follows:
\begin{align}
    u_{t}
    &=\sum_{i=0}^{L}K_{r,i}r_{t-i}+\sum_{i=0}^{L}K_{y,i}y_{t-i}+\sum_{i=0}^{L-1}K_{u,i}u_{t-(i+1)}, \nonumber \\
    &=\sum_{i=0}^{L}\hat{K}_{i}\hat{d}_{i}, \label{eq:controller_data_block}
\end{align}
where
\begin{align*}
    \hat{K}_{i}&=
    \left\{
    \begin{alignedat}{2}
        &\begin{bmatrix}
            K_{r,i} & K_{y,i} & K_{u,i}
        \end{bmatrix}, &\quad &i=0,\dots,L-1, \\
        &\begin{bmatrix}
            K_{r,i} & K_{y,i} & O_{m\times m}
        \end{bmatrix}, &\quad &i=L,
    \end{alignedat}
    \right. \\
    \hat{d}_{i}&=
    \left\{
    \begin{alignedat}{2}
        &\begin{bmatrix}
            r_{t-i}^{\top} & y_{t-i}^{\top} & u_{t-i}^{\top}
        \end{bmatrix}^{\top}, &\quad &i=0,\dots,L-1, \\
        &\begin{bmatrix}
            r_{t-i}^{\top} & y_{t-i}^{\top} & \bfzero_{m}
        \end{bmatrix}^{\top}, &\quad &i=L,
    \end{alignedat}
    \right. \\
    K_{r,i}&=K_{1:m,qi+1:q(i+1)}, \\
    K_{y,i}&=K_{1:m,q(L+1)+\ell i+1:q(L+1)+\ell(i+1)}, \\
    K_{u,i}&=K_{1:m,(q+\ell)(L+1)+mi+1:(q+\ell)(L+1)+m(i+1)},
\end{align*}
and $K_{a:b,c:d}$ is a block matrix obtained by slicing the $a$th to $b$th rows and the $c$th to $d$th columns of $K$.
\figref{fig:modified_controller_data} depicts a schematic picture of the modified IOHFC \eqref{eq:controller_data_block} that has a queue of length $L$.
The current reference $r_{t}$ and output $y_{t}$ are appended to the back end of the queue.
Then, the data $\hat{d}_{i}$ in each slot of the queue is multiplied by the gain $\hat{K}_{i}$ and aggregated to obtain a control input $u_{t}$.
The obtained control input is used to update the data $\hat{d}_{L-1}$ stored in the second slot from the back of the queue.
Note that the products between $\hat{K}_{i}$ and $\hat{d}_{i}$ and their aggregation can be computed over a ciphertext space by applying \algref{alg:mat_mult}.

\begin{figure}[!t]
    \centering
    \includegraphics[scale=1]{./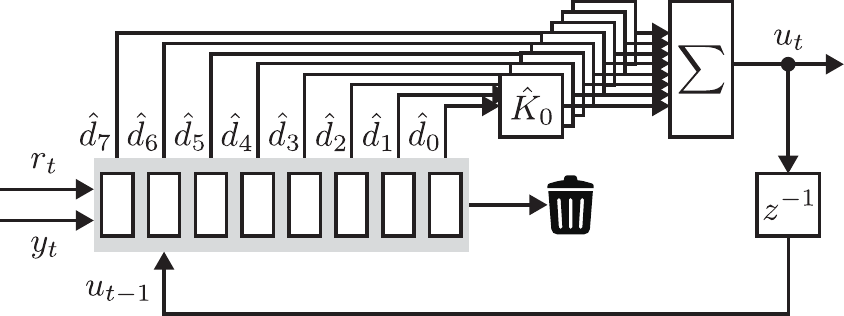}
    \caption{Schematic picture of the modified IOHFC ($L=7$).}
    \label{fig:modified_controller_data}
\end{figure}

Here, we need to address another problem due to accumulation of sensitivities and an increase of noise in a ciphertext stored in the controller.
A control input $u_{t}$ in \figref{fig:modified_controller_data} is added to the queue and recursively used to the next $L-1$ times the control input computations.
This operation accumulates the sensitivity $\Delta_{K}$, which is used for encoding the controller gain, and increases the noise of the ciphertext in the second slot from the back end of the queue.
In such a case, the encrypted controller cannot operate for an infinite time horizon due to overflow, as discussed in \secref{sec:problem}.

This study considers \textit{input re-encryption}, as shown in \figref{fig:ecs}, to solve the problem of accumulation of the sensitivity.
In the figure, the sensitivities in $\Enc,\Dec$ are omitted for simplicity.
The overall processes of the encrypted control system are described in \algref{alg:enc_control_system}.
Before operating the encrypted controller, a designer creates and distributes keys to the plant, operator, and controller (line $2$).
Additionally, he/she initializes the ciphertexts of the controller gain matrices $\ct_{K}$ and queue $\ct_{d}$, which stores ciphertexts of history data (lines $3$--$6$).
The operator and plant respectively pack and encrypt the current reference input $r_{t}$ and output $y_{t}$ into single ciphertexts and transmit them to the controller, and then controller adds the ciphertexts to the back end of the queue. (lines $8$--$13$).
The controller evaluates \eqref{eq:controller_data_block} over the ciphertext space with encrypted controller gains and encrypted history data using the same methodology as \algref{alg:mat_mult} (lines $14$--$21$).
The controller updates the queue and returns the computed ciphertext to the plant, and then the plant recovers input $u_{t}$ by decrypting the ciphertext (lines $22$--$28$).
Subsequently, the input is re-encrypted and transmitted to the controller, and then the controller adds the ciphertext to the second slot from the back end of the queue (lines $29$--$31$).

\begin{figure}[!t]
    \centering
    \includegraphics[scale=1]{./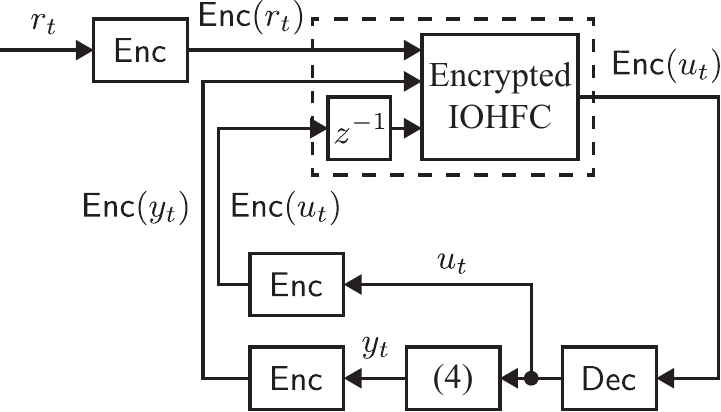}
    \caption{Encrypted control system with IOHFC.}
    \label{fig:ecs}
    \vspace{-5mm}
\end{figure}

\begin{figure}[!t]
    \vspace{-3mm}
    \begin{algorithm}[H]
        \caption{Implementation of the encrypted IOHFC with input re-encryption.}
        \label{alg:enc_control_system}
        \begin{algorithmic}[1]
            \Require $\lambda$, $K$, $r_{t}$, $y_{t}$, $\Delta_{K}$, $\Delta_{d}$, $L$, $q$, $\ell$, $m$, $h=q+\ell+m$
            \Ensure $u_{t}$
            \State \# Preprocessing of designer
            \State $(\pk,\sk,\rlk)\gets\KeyGen(\lambda)$, and transmit $(\pk,\sk)$, $\pk$, and $\rlk$ to the plant, operator, and controller, respectively
            \For{$i=0,\dots,L$}
                \State $\ct_{K}[i]\gets\Enc_{\Delta_{K}}([\vec(\hat{K}_{i}^{\top})^{\top}\ \bfzero_{N-mh}])$
                \State $\ct_{d}[i]\gets\Enc_{\Delta_{d}}(\bfzero_{N})$
            \EndFor
            \Loop
                \State \# Operator transmits reference input to controller
                \State $\ct_{r}\gets\Enc_{\Delta_{d}}([r_{t}^{\top}\ \bfzero_{\ell}\ \bfzero_{m}\ \cdots\ r_{t}^{\top}\ \bfzero_{\ell}\ \bfzero_{m}\ \bfzero_{N-mh}])$
                \State $\ct_{d}[L]\gets\Add(\ct_{d}[L],\ct_{r})$
                \State \# Plant transmits output to controller
                \State $\ct_{y}\gets\Enc_{\Delta_{d}}([\bfzero_{q}\ y_{t}^{\top}\ \bfzero_{m}\ \cdots\ \bfzero_{q}\ y_{t}^{\top}\ \bfzero_{m}\ \bfzero_{N-mh}])$
                \State $\ct_{d}[L]\gets\Add(\ct_{d}[L],\ct_{y})$
                \State \# Controller returns input ciphertext $\ct$ to plant
                \State $\ct\gets\Mult(\ct_{K}[0],\ct_{d}[0])$
                \For{$i=1,\dots,L$}
                    \State $\ct\gets\Add(\ct,\Mult(\ct_{K}[i],\ct_{d}[i]))$
                \EndFor
                \While{$h-1$ times}
                    \State $\ct\gets\Add(\ct,\Rotate(\ct))$
                \EndWhile
                \State \# Controller updates history data
                \For{$i=0,\dots,L-1$}
                    \State $\ct_{d}[i]\gets\ct_{d}[i+1]$
                \EndFor
                \State \# Plant recovers input
                \State $w\gets\Dec_{\Delta_{K}\Delta_{d}}(\ct)$
                \State $u_{t}\gets[w_{1}\ w_{h+1}\ \cdots\ w_{(m-1)h+1}]^{\top}$
                \State \# Plant transmits re-encrypted input to controller
                \State $\ct_{u}\gets\Enc_{\Delta_{d}}([\bfzero_{q}\ \bfzero_{\ell}\ u_{t}^{\top}\ \cdots\ \bfzero_{q}\ \bfzero_{\ell}\ u_{t}^{\top}\ \bfzero_{N-mh}])$
                \State $\ct_{d}[L-1]\gets\Add(\ct_{d}[L-1],\ct_{u})$
            \EndLoop
            \end{algorithmic}
    \end{algorithm}
    \vspace{-10mm}
\end{figure}

Owing to input re-encryption, the sensitivity of each element in the queue $\ct_{d}$ is $\Delta_{d}$, which is used for encoding the history data, even though the sensitivity of $\ct$ is $\Delta_{K}\Delta_{d}$.
Moreover, ciphertexts appended to the back end of the queue are always fresh ciphertexts.
Thus, the sensitivity and noise in each ciphertext stored in the queue do not accumulate and increase, and the encrypted control system with the IOHFC can operate for an infinite time horizon without overflow.

\begin{remark}
    The number of executed algorithms of the BFV scheme within a sampling period is listed in \tabref{tab:num_alg}, where $h=q+\ell+m$.
    Despite the fact that the reference input $r$, output $y$, and input $u$ are vectors, the operator and the plant respectively execute encryption only once and twice before sending them to the controller, and the plant also performs decryption once after receiving the ciphertext from the controller.
    Furthermore, the number of homomorphic multiplications computed by the controller does not depend on the dimensions of the vectors.
    This is because the vectors are packed into single ciphertexts by the CRT batching.
\end{remark}

\begin{table}[!t]
    \centering
    \caption{Number of Algorithms Executed Within a Sampling Period}
    \begin{tabular}{l|ccccc}
        \hline
                       & $\Enc$ & $\Dec$ & $\Add$  & $\Mult$ & $\Rotate$ \\ \hline
        Operator   & $1$    & --     & --      & --      & --        \\
        Plant      & $2$    & $1$    & --      & --      & --        \\
        Controller & --     & --     & $L+h+2$ & $L+1$   & $h-1$     \\ \hline 
    \end{tabular}
    \label{tab:num_alg}
    \vspace{-5mm}
\end{table}

\begin{remark}
    In~\cite{Kim22,Kim20_2}, the dynamic controller \eqref{eq:controller} was transformed to the form
    \[
        \left\{
        \begin{aligned}
            z'_{t+1} &= M (A - GC) M^{-1} z'_t + M (B - GD) y_t \\ 
                     &\quad + M (E - GF) r_t + MG u_t, \\
            u_t      &= CM^{-1} z'_{t} + D y_t + F r_t, \quad z'_0 = M z_0
        \end{aligned}
        \right.
    \]
    by appropriately choosing $G\in\R^{p\times m}$ and $M\in\R^{p\times p}$ so that $M(A-GC)M^{-1}\in\Z^{p\times p}$ and encrypted by using the GSW encryption.
    Thus, $[(p+q+\ell+m)(N+1)+\{p(p+q+\ell+m)+m(p+q+\ell)\}b(N+1)^{2}]\log_{2}Q$~bits memory is required to naively implement the encrypted controller because each element of the signals and controller parameters are respectively encrypted to an element in $\Z_{Q}^{N+1}$ and $\Z_{Q}^{(N+1)\times b(N+1)}$, where $b\in\N$.
    In contrast, the encrypted IOHFC requires $4(p+1)N\log_{2}Q$~bits memory because $\hat{K}_{i}$ and $\hat{d}_{i}$ for $i=0,\dots,L$ are encrypted to elements in $R_{Q}^{2}$, and $L=p$ is the smallest choice of data length.
    Note that, in a naive implementation, an element in $R_{Q}^{2}$ can be represented by two $N$-dimensional vectors of which coefficients are in $\Z_{Q}$.
    As an example, if the parameters are set to $p=q=\ell=m=2$, $N=4096$, $\log_2 Q=109$, and $b=3$, then the memory sizes of the conventional and our methods are almost $17.89$~GB and $654$~KB, respectively.
    Moreover, although the conventional method takes $p(p+q+\ell+m)+m(p+q+\ell)$ multiplications and $p(p+q+\ell+m)+m(p+q+\ell)-(4p+3m)$ additions for each time step, our method takes $p+1$ multiplications, $p+q+\ell+m+2$ additions, and $q+\ell+m-1$ rotations.
    The computation time of multiplication is much longer than addition and rotation.
    Therefore, our method can improve the time and space complexities of the conventional method.
\end{remark}

\section{Analysis of Quantization Effects}\label{sec:analysis}
\color{black}

This section analyzes the stability and performance degradation caused by quantization in encrypted control systems when $w_{t}=v_{t}=0$ for simplicity.
Let $\Qtz_{\Delta}$ be the composite mapping of $\Enc_{\Delta}$ and $\Dec_{\Delta}$, then $\Qtz_{\Delta}=\Dec_{\Delta}\circ\Enc_{\Delta}=\Dcd_{\Delta}\circ\sigma^{-1}\circ\Dec\circ\Enc\circ\sigma\circ\Ecd_{\Delta}=\Dcd_{\Delta}\circ\Ecd_{\Delta}$.
Thus, the map $\Qtz_{\Delta}$ behaves as a quantizer.
With the quantizer, the decrypted input of the encrypted IOHFC is equivalent to
\begin{equation}
    u_{t}\!=\!\sum_{i=0}^{L}\!\Qtz_{\Delta_{K}}\!(\hat{K}_{i})\Qtz_{\Delta_{d}}\!(\hat{d}_{i})\!=\!\Qtz_{\Delta_{K}}\!(K)\Qtz_{\Delta_{d}}\!(d_{t})\!=\!\bar{K}\bar{d}_{t},
    \label{eq:qtz_controller}
\end{equation}
where $\bar{K}\coloneqq\Qtz_{\Delta_{K}}(K)$ and $\bar{d}\coloneqq\Qtz_{\Delta_{d}}(d)$.
Quantization errors caused by the quantization are respectively bounded from above by
\begin{align}
    \|\tldd\|&\le\sqrt{(L+1)(q+\ell)+Lm}\Delta_{d}/2\eqqcolon\eta_{d}, \label{eq:bound_v} \\
    \|\tldK\|&\le\sqrt{(L+1)(q+\ell)m+Lm^{2}}\Delta_{K}/2\eqqcolon\eta_{K}, \label{eq:bound_K}
\end{align}
where $\tldd\coloneqq\bar{d}-d$ and $\tldK\coloneqq\bar{K}-K$.
The quantized controller \eqref{eq:qtz_controller} induces destabilization and performance degradation of the control system when the sensitivities are not sufficiently small.
We show a condition for maintaining stability even after quantization and estimating the degree of performance degradation in the following.
To this end, rewrite the system \eqref{eq:plant} as
\begin{equation}
    \left\{
    \begin{aligned}
        \sfx_{t+1}&=\sfA\sfx_{t}+\sfB u_{t}+\sfE r_{t}, \\
        d_{t}&=\sfC_{1}\sfx_{t}+\sfF r_{t},\quad y_{t}=\sfC_{2}\sfx_{t},
    \end{aligned}
    \right.
    \label{eq:plant_data}
\end{equation}
where $\sfx_{t}=[([r_{k}]_{t-1}^{t-L})^{\top}\ \bfzero_{q}\ ([x_{k}]_{t}^{t-L})^{\top}\ ([u_{k}]_{t-1}^{t-L})^{\top}]^{\top}$,
\begin{align*}
    \sfA&\!=\!\diag{\sfA_{1},\sfA_{2},\sfA_{3}}, \\
    \sfA_{1}&\!=\!
    \begin{bmatrix}
        O_{(L-1)q\times q} & I_{(L-1)q}         & O_{(L-1)q\times q} \\
        O_{2q\times q}     & O_{2q\times(L-1)q} & O_{2q\times q}
    \end{bmatrix}\!, \\
    \sfA_{2}&\!=\!
    \begin{bmatrix}
        O_{Ln\times n} & I_{Ln} \\
        O_{n\times Ln} & A_{p}
    \end{bmatrix}\!,\ 
    \sfA_{3}\!=\!
    \begin{bmatrix}
        O_{(L-1)m\times m} & I_{(L-1)m} \\
        O_{m\times m}      & O_{m\times(L-1)m}
    \end{bmatrix}\!, \\
    \sfB&\!=\!
    \begin{bmatrix}
        O_{((L+1)q+Ln)\times m}^{\top} & B_{p}^{\top} & O_{(L-1)m\times m}^{\top} & I_{m}
    \end{bmatrix}^{\top}\!, \\
    \sfC_{1}&\!=\!\diag{I_{(L+1)q},I_{L+1}\otimes C_{p},I_{Lm}}, \\
    \sfC_{2}&\!=\!
    \begin{bmatrix}
        O_{\ell\times((L+1)q+Ln)} & C_{p} & O_{\ell\times Lm}
    \end{bmatrix}\!, \\
    \sfE&\!=\!
    \begin{bmatrix}
        O_{(L-1)q\times q}^{\top} & I_{q} & O_{(q+(L+1)n+Lm)\times q}^{\top}
    \end{bmatrix}^{\top}\!, \\
    \sfF&\!=\!
    \begin{bmatrix}
        O_{Lq\times q}^{\top} & I_{q} & O_{((L+1)\ell+Lm)\times q}^{\top}
    \end{bmatrix}^{\top}\!.
\end{align*}
The following lemma shows that the stability of the original closed-loop system is invariant even if its plant is rewritten and the IOHFC is utilized.

\begin{lemma}\label{lem:stability}
    Suppose $r_{t}=0$ for all $t\in\N$ and $z_{t}=0$ for all $t\le 0$.
    The closed-loop system with \eqref{eq:plant_data} and \eqref{eq:controller_data} is stable if and only if that with \eqref{eq:plant} and \eqref{eq:controller} is stable.
\end{lemma}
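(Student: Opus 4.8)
The plan is to reduce both closed loops to autonomous linear recursions and then to show their system matrices share the same spectral radius. Putting $r_t=0$ in \eqref{eq:plant}–\eqref{eq:controller} and eliminating $u_t$ yields $\chi_{t+1}=\Acl\chi_t$ with $\chi_t=[x_t^\top\ z_t^\top]^\top$ and
\[
\Acl=\begin{bmatrix} A_p+B_pDC_p & B_pC \\ BC_p & A \end{bmatrix},
\]
while substituting $u_t=Kd_t=K\sfC_1\sfx_t$ into \eqref{eq:plant_data} gives $\sfx_{t+1}=\Aclb\sfx_t$ with $\Aclb=\sfA+\sfB K\sfC_1$. Since discrete-time stability means $\rho(\Acl)<1$ and $\rho(\Aclb)<1$ respectively, it suffices to prove $\rho(\Aclb)=\rho(\Acl)$, which I would obtain from the two inclusions $\mathrm{spec}(\Acl)\subseteq\mathrm{spec}(\Aclb)$ and $\mathrm{spec}(\Aclb)\subseteq\mathrm{spec}(\Acl)\cup\{0\}$.

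For the first inclusion I would lift each modal solution of the original loop into the history coordinates. Given an eigenpair $(\lambda,(\bar x,\bar z))$ of $\Acl$, put $\bar u=C\bar z+DC_p\bar x$ and build the vector whose history blocks are the geometric sequences $x_{t-j}=\lambda^{-j}\bar x$, $u_{t-j}=\lambda^{-j}\bar u$, $y_{t-j}=C_px_{t-j}$, with zero $r$-blocks. Using the shift structure of $\sfA_1,\sfA_2,\sfA_3$, the plant relation $\lambda\bar x=A_p\bar x+B_p\bar u$, and the fact that this modal sequence is a genuine closed-loop trajectory—so that \thmref{theorem:iohfc} gives $Kd_t=C\bar z+DC_p\bar x=\bar u$—one checks directly that the lifted vector is an eigenvector of $\Aclb$ for the same $\lambda$. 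It remains only to see that the lifted vector is nonzero: if $\bar x=0$ then $\bar z\neq0$ is an eigenvector of $A$, and observability of $(A,C)$ forces $C\bar z\neq0$, hence $\bar u\neq0$ and the $u$-history does not vanish.

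The reverse inclusion is the crux. Take an eigenpair $(\lambda,v)$ of $\Aclb$ with $\lambda\neq0$. Reading $\Aclb v=\lambda v$ block by block, the pure-shift structure of $\sfA_1,\sfA_2,\sfA_3$ forces consecutive history blocks of $v$ to differ by the factor $\lambda$, so every block is geometric and $v$ encodes modal signals $x_k=\lambda^k\bar x$, $u_k=\lambda^k\bar u$, $y_k=C_px_k$; the current-time relations read $(\lambda I-A_p)\bar x=B_p\bar u$ and $\bar u=K\sfC_1 v$. Because $\rank V_L=p$, I reconstruct a controller amplitude through \eqref{eq:x_tL}, namely $\bar z=V_L^{+}([u_k]_{t-1}^{t-L}-H_L[y_k]_{t-1}^{t-L})$ evaluated on the modal history, and the algebraic identities that produced \eqref{eq:controller_data} return the output relation $\bar u=C\bar z+DC_p\bar x$. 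Since advancing $\Aclb$ by one step multiplies the whole history by $\lambda$, the reconstructed state is itself modal, $z_k=\lambda^k\bar z$; feeding it through the controller recursion \eqref{eq:controller} then collapses to $\lambda\bar z=A\bar z+BC_p\bar x$. These three relations are exactly $\Acl[\bar x^\top\ \bar z^\top]^\top=\lambda[\bar x^\top\ \bar z^\top]^\top$, so $\lambda\in\mathrm{spec}(\Acl)$, and every remaining eigenvalue of $\Aclb$ must be $0$.

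Combining the inclusions gives $\rho(\Aclb)=\rho(\Acl)$, from which the stated equivalence of stability follows. I expect the reverse inclusion to be the main obstacle: the delicate point is to show that an eigenvector of the high-dimensional lifted matrix necessarily carries a \emph{trajectory-consistent} modal history, so that the amplitude reconstructed through $V_L^{+}$ satisfies both the output and the state-update equations of the original controller simultaneously. This is precisely where observability ($\rank V_L=p$) and the exact IOHFC identities behind \thmref{theorem:iohfc} are indispensable, and where the spurious modes generated by inconsistent initial histories are isolated as the zero eigenvalues of $\Aclb$.
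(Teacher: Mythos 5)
Your strategy---identifying the spectrum of the lifted closed-loop matrix $\sfA+\sfB K\sfC_{1}$ (your $\Aclb$) with that of the original $(n+p)$-dimensional closed-loop matrix up to extra zero eigenvalues---is a genuinely different and more ambitious route than the paper's. The paper's proof is a two-line trajectory argument: by \thmref{theorem:iohfc} the two closed loops generate identical sequences $\{x_{t}\}$ and $\{u_{t}\}$ from the stated initial data, and since $\sfx_{t}$ is nothing but a stack of past values of $r$, $x$, and $u$, convergence of the latter transfers to $\sfx_{t}$ immediately (the converse being read off the same identification). Note that the paper's argument only controls trajectories launched from controller-consistent histories, whereas your spectral claim, if established, yields Schur stability of $\sfA+\sfB K\sfC_{1}$ over the \emph{whole} lifted state space, which is what the Lyapunov equation in \thmref{thm:lyapunov} actually requires; so your target is the stronger and arguably the more useful statement. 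Your forward inclusion is essentially complete (the PBH/observability argument for nonvanishing of the lifted eigenvector is correct, and $\lambda=0$ is irrelevant for comparing spectral radii).

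The gap is exactly where you suspect it: in the reverse inclusion, the state-update relation $\lambda\bar z=A\bar z+BC_{p}\bar x$ is asserted but not proved. Modality of the reconstructed state only gives $z^{\mathrm{rec}}_{t+1}=\lambda z^{\mathrm{rec}}_{t}$, which is a different statement from $z^{\mathrm{rec}}_{t+1}=Az^{\mathrm{rec}}_{t}+By_{t}$; ``feeding it through the controller recursion'' merely restates the equation to be shown. Concretely, writing $\bar z=A^{L}V_{L}^{+}\mu+R_{L}\eta$ with $\mu=[u_{k}]_{t-1}^{t-L}-H_{L}[y_{k}]_{t-1}^{t-L}$ and $\eta=[y_{k}]_{t-1}^{t-L}$, the geometric structure of the eigenvector gives $(\lambda I-A)R_{L}\eta=BC_{p}\bar x-A^{L}By_{t-L}$, so what remains is precisely $A^{L}\bigl(V_{L}^{+}(\lambda\mu)-AV_{L}^{+}\mu-By_{t-L}\bigr)=0$. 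This identity holds when $\mu\in\mathrm{range}(V_{L})$ (a controller-consistent history), but nothing in the eigenvector equation forces that, and off $\mathrm{range}(V_{L})$ the pseudoinverse $V_{L}^{+}$ is only a least-squares fit, so the cancellation can a priori fail. To close the argument you must either prove that every eigenvector of $\sfA+\sfB K\sfC_{1}$ with $\lambda\neq0$ has $\mu\in\mathrm{range}(V_{L})$, or verify the displayed identity while tracking the residual $(I-V_{L}V_{L}^{+})\mu$. The claim itself appears to be true---for a scalar controller with $L=1$ one computes $\det(\lambda I-(\sfA+\sfB K\sfC_{1}))=\lambda\,\det(\lambda I-\Acl^{\mathrm{orig}})$, consistent with the extra modes being deadbeat---but as written the crux step is a restatement of the goal rather than a proof of it.
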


\begin{proof}
    See \appref{app:stability}.
\end{proof}

Next, the stability condition for $\Delta_{K}$ is derived as follows.

\begin{theorem}\label{thm:lyapunov}
    Given the controller \eqref{eq:controller} stabilizing the plant \eqref{eq:plant} with $r_{t}=0$ for $t\in\N$ and $z_{t}=0$ for $t\le 0$.
    If the sensitivity $\Delta_{K}$ is chosen such that
    \begin{equation}
        \Delta_{K}<\beta_{1}\left(-\beta_{2}+\sqrt{\beta_{3}}\right)
        \label{eq:Delta_K}
    \end{equation}
    with
    \begin{align*}
        \beta_{1}&=2\left(\sqrt{(L+1)(q+\ell)m+Lm^{2}}\|\sfB^{\top}P\sfB\|\|\sfC_{1}\|\right)^{-1}, \\
        \beta_{2}&=\|(\sfA+\sfB K\sfC_{1})^{\top}P\sfB\|, \\
        \beta_{3}&=\|(\sfA+\sfB K\sfC_{1})^{\top}P\sfB\|^{2}+\lambda_{\min}(Q)\|\sfB^{\top}P\sfB\|,
    \end{align*}
    then the closed-loop system with the system \eqref{eq:plant_data} and the controller
    \begin{equation}
        u_{t}=\bar{K}d_{t}
        \label{eq:qtz_controller_gain}
    \end{equation}
    is stable, where $P$ and $Q$ are positive definite matrices satisfying $(\sfA+\sfB K\sfC_{1})^{\top}P(\sfA+\sfB K\sfC_{1})-P+Q=O$.
\end{theorem}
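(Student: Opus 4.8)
The plan is to establish stability of the quantized closed-loop system via a Lyapunov argument, treating the quantization error $\tldK=\bar{K}-K$ as a perturbation of the nominal system matrix. By \lemref{lem:stability}, the nominal closed-loop system $\sfx_{t+1}=(\sfA+\sfB K\sfC_1)\sfx_t$ (with $r_t=0$) is stable, so $\rho(\sfA+\sfB K\sfC_1)<1$ and the discrete Lyapunov equation $(\sfA+\sfB K\sfC_1)^{\top}P(\sfA+\sfB K\sfC_1)-P+Q=O$ admits positive definite solutions $P,Q$ for any chosen $Q\succ 0$. With the quantized controller \eqref{eq:qtz_controller_gain} and $r_t=0$, the closed-loop dynamics become $\sfx_{t+1}=\Aclb\,\sfx_t$ where $\Aclb=\sfA+\sfB\bar{K}\sfC_1=\Acl+\sfB\tldK\sfC_1$ and $\Acl\coloneqq\sfA+\sfB K\sfC_1$. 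The goal is to show that $V(\sfx)=\sfx^{\top}P\sfx$ remains a Lyapunov function, i.e. $\Aclb^{\top}P\Aclb-P\prec O$.

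First I would expand the difference
\[
    \Aclb^{\top}P\Aclb-P
    =(\Acl+\sfB\tldK\sfC_1)^{\top}P(\Acl+\sfB\tldK\sfC_1)-P.
\]
Using the Lyapunov equation to substitute $\Acl^{\top}P\Acl-P=-Q$, this equals
\[
    -Q+2\,\mathrm{sym}\!\left((\sfB\tldK\sfC_1)^{\top}P\Acl\right)+(\sfB\tldK\sfC_1)^{\top}P(\sfB\tldK\sfC_1),
\]
where $\mathrm{sym}(\cdot)$ denotes the symmetric part. The next step is to bound the two perturbation terms in induced $2$-norm. Since $\|M\|I\succeq M$ for symmetric $M$, it suffices to show $\lambda_{\min}(Q)$ dominates the norms of the cross term and the quadratic term. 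The cross term is bounded by $2\|\Acl^{\top}P\sfB\|\,\|\tldK\|\,\|\sfC_1\|=2\beta_2\|\sfC_1\|\,\|\tldK\|$ and the quadratic term by $\|\sfB^{\top}P\sfB\|\,\|\sfC_1\|^2\|\tldK\|^2$. Substituting the uniform bound $\|\tldK\|\le\eta_K=\sqrt{(L+1)(q+\ell)m+Lm^2}\,\Delta_K/2$ from \eqref{eq:bound_K}, stability follows once
\[
    \|\sfB^{\top}P\sfB\|\,\|\sfC_1\|^2\eta_K^2+2\beta_2\|\sfC_1\|\,\eta_K-\lambda_{\min}(Q)<0.
\]
This is a quadratic inequality in $\eta_K$ (hence in $\Delta_K$); solving for its positive root and unwinding the definition $\eta_K=\sqrt{(L+1)(q+\ell)m+Lm^2}\,\Delta_K/2$ yields precisely the threshold $\Delta_K<\beta_1(-\beta_2+\sqrt{\beta_3})$ after identifying $\beta_1$ with the reciprocal normalizing factor.

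The main obstacle I anticipate is the bookkeeping that converts the scalar quadratic root back into the stated form for $\Delta_K$: one must carefully match $\beta_1,\beta_2,\beta_3$ with the coefficients of the quadratic in $\eta_K$, verify that the discriminant $\beta_3$ is nonnegative so the square root is real, and confirm that the bound is expressed in terms of $\|\sfC_1\|$ correctly (the definition of $\beta_1$ already folds in $\|\sfC_1\|$ through $\|\sfB^{\top}P\sfB\|\|\sfC_1\|$). A subtler point is justifying that norm-dominance of the symmetric perturbation terms over $Q$ in the Loewner order is both necessary for the Lyapunov decrease and sufficiently tight; using $\lambda_{\min}(Q)I\preceq Q$ together with $-\|A\|I\preceq A$ for the symmetrized perturbation makes this rigorous, but care is needed because $\mathrm{sym}((\sfB\tldK\sfC_1)^{\top}P\Acl)$ is only bounded, not known exactly. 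Once these estimates are assembled, the strict inequality $\Aclb^{\top}P\Aclb-P\prec O$ certifies $\rho(\Aclb)<1$ and hence stability of the quantized closed-loop system.
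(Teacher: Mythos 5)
Your proposal is correct and follows essentially the same route as the paper's proof: perturb the nominal closed loop by $\sfB\tldK\sfC_{1}$, use the Lyapunov equation for $\Acl=\sfA+\sfB K\sfC_{1}$ to reduce the decrease condition to the quadratic inequality $\|\sfB^{\top}P\sfB\|\|\sfC_{1}\|^{2}\|\tldK\|^{2}+2\|\Acl^{\top}P\sfB\|\|\sfC_{1}\|\|\tldK\|-\lambda_{\min}(Q)<0$, and convert its positive root into the threshold on $\Delta_{K}$ via \eqref{eq:bound_K}. The only difference is cosmetic---you phrase the estimate in the Loewner order while the paper bounds the quadratic form $\sfx^{\top}(\cdot)\sfx$ directly---and your worries about the discriminant are unfounded since $\beta_{3}=\beta_{2}^{2}+\lambda_{\min}(Q)\|\sfB^{\top}P\sfB\|\ge\beta_{2}^{2}$.
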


\begin{proof}
    See \appref{app:lyapunov}.
\end{proof}

It should be noted that if $\Delta_{K}$ satisfies the condition \eqref{eq:Delta_K}, then the control system is bounded-input bounded-output stable regardless of the choice of $\Delta_{d}$.
This is because the closed-loop system with \eqref{eq:plant_data} and \eqref{eq:qtz_controller} is given as $\sfx_{t+1}=\sfA\sfx_{t}+\sfB\bar{K}\bar{d}_{t}+\sfE r_{t}=(\sfA+\sfB\bar{K}\sfC_{1})\sfx_{t}+(\sfE+\sfB\bar{K}\sfF)r_{t}+\sfB\bar{K}\tldd_{t}$, and $\tldd$ is bounded by \eqref{eq:bound_v}.
Meanwhile, the output trajectory of the closed-loop system differs from the original trajectory.
Moreover, a quantization error of $d$ would further degrade the control performance.
The following theorem estimates the degree of performance degradation induced by quantization of $K$ and $d$.

\begin{theorem}\label{thm:error}
    Given the initial state $x_{0}$ and $\sfx_{0}=[\bfzero_{(L+1)q}\ \bfzero_{Ln}\ x_{0}^{\top}\ \bfzero_{Lm}]^{\top}$.
    Suppose that $\sfA+\sfB K\sfC_{1}$ is stable, and $\Delta_{K}$ satisfies the condition \eqref{eq:Delta_K}.
    Let $y'$ be the output of the system \eqref{eq:plant_data} with the controller \eqref{eq:qtz_controller}.
    The supremum of the error between $y(K,\sfx_{0})$ and $y'(\bar{K},\sfx_{0})$ is bounded by
    \[
        \sup_{t>0}\left\|y_{t}(K,\sfx_{0})\!-\!y'_{t}(\bar{K},\sfx_{0})\right\|\!\le\!\theta_{1}c^{2}\tau\gamma^{\tau-1}+\cfrac{\theta_{2}c^{2}}{(1-\gamma)^{2}}+\cfrac{\theta_{3}c}{1-\gamma},
    \]
    where $\theta_{1}=\|\sfC_{2}\sfB\tilde{K}\sfC_{1}\|\|x_{0}\|$, $\theta_{2}=\|\sfC_{2}\sfB\tilde{K}\sfC_{1}\|\|\sfE+\sfB K\sfF\|B_{r}$, $\theta_{3}=\|\sfC_{2}\sfB\|(\|\tilde{K}\sfF\|B_{r}+\|\bar{K}\|\eta_{d})$, and $B_{r}=\sup_{t>0}\|r_{t}\|$.
    The parameters $\gamma$, $c$, and $\tau$ are determined by
    \begin{align*}
        &\hspace{-.5eM}\max\{\rho(\sfA+\sfB K\sfC_{1}),\rho(\sfA+\sfB\bar{K}\sfC_{1})\}<\gamma<1, \\
        c&\!=\!\max_{1\le k\le M}\{1,\gamma^{-k}\|(\sfA\!+\!\sfB K\sfC_{1})^{k}\|,\gamma^{-k}\|(\sfA\!+\!\sfB\bar{K}\sfC_{1})^{k}\|\}, \\
        \tau&\!=\!\round{-(\log\gamma)^{-1}},
    \end{align*}
    where $M$ is a nonnegative integer such that $\|(\sfA+\sfB K\sfC_{1})^{k}\|<\gamma^{k}$ and $\|(\sfA+\sfB\bar{K}\sfC_{1})^{k}\|<\gamma^{k}$ for all $k\ge M$.
\end{theorem}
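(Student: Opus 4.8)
The plan is to compare the two closed loops driven by the same reference and the same initial condition, and to track how the two quantization sources, the gain error $\tldK = \bar{K} - K$ and the data error $\tldd_t = \bar{d}_t - d_t$, propagate through the loop. First I would write both systems in the coordinates of \eqref{eq:plant_data}. Substituting $d_t = \sfC_1\sfx_t + \sfF r_t$ and $y_t = \sfC_2\sfx_t$, the nominal controller \eqref{eq:controller_data} yields $\sfx_{t+1} = \Acl\sfx_t + (\sfE + \sfB K\sfF)r_t$ with $\Acl = \sfA + \sfB K\sfC_1$, while the quantized controller \eqref{eq:qtz_controller} yields $\sfx'_{t+1} = \Aclb\sfx'_t + (\sfE + \sfB\bar{K}\sfF)r_t + \sfB\bar{K}\tldd_t$ with $\Aclb = \sfA + \sfB\bar{K}\sfC_1$, where $\|\tldd_t\| \le \eta_d$ by \eqref{eq:bound_v}. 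Both start from the common $\sfx_0$. Since $\Acl$ is stable by hypothesis and $\Aclb$ is stable by \thmref{thm:lyapunov} under \eqref{eq:Delta_K}, a common $\gamma$ with $\max\{\rho(\Acl),\rho(\Aclb)\} < \gamma < 1$ exists.

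The key preliminary estimate is the uniform geometric bound $\|\Acl^k\| \le c\gamma^k$ and $\|\Aclb^k\| \le c\gamma^k$ for every $k \ge 0$. This follows from $\rho(\cdot) < \gamma$: Gelfand's formula gives a finite $M$ beyond which $\|\Acl^k\|, \|\Aclb^k\| < \gamma^k$, and the definition of $c$ (with $c \ge 1$) absorbs the finitely many transient indices $k < M$. With this in hand I would apply variation of constants, $\sfx_t = \Acl^t\sfx_0 + \sum_{j=0}^{t-1}\Acl^{t-1-j}(\sfE + \sfB K\sfF)r_j$ and the analogous expression for $\sfx'_t$, subtract, and left-multiply by $\sfC_2$. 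Grouping the result by its source produces exactly three families of terms: an initial-condition term $\sfC_2(\Acl^t - \Aclb^t)\sfx_0$, a reference term $\sfC_2\sum_j[\Acl^{t-1-j}(\sfE + \sfB K\sfF) - \Aclb^{t-1-j}(\sfE + \sfB\bar{K}\sfF)]r_j$, and a data-noise term $-\sfC_2\sum_j\Aclb^{t-1-j}\sfB\bar{K}\tldd_j$.

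The crux is controlling the differences of matrix powers so that the gain-error contributions carry an explicit factor of $\tldK$, and hence vanish as $\Delta_K \to 0$. For this I would use the telescoping identity $\Acl^k - \Aclb^k = \sum_{i=0}^{k-1}\Acl^{k-1-i}(\Acl - \Aclb)\Aclb^i$ together with $\Acl - \Aclb = -\sfB\tldK\sfC_1$, which gives $\|\Acl^k - \Aclb^k\| \le c^2\|\sfB\tldK\sfC_1\|\,k\gamma^{k-1}$. In the reference term I would further split $\Acl^{k}(\sfE + \sfB K\sfF) - \Aclb^{k}(\sfE + \sfB\bar{K}\sfF) = (\Acl^{k} - \Aclb^{k})(\sfE + \sfB K\sfF) - \Aclb^{k}\sfB\tldK\sfF$, isolating a power-difference piece (which inherits the $c^2$ and the $k\gamma^{k-1}$) from a single-power piece. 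Summing over $j$ reduces to the elementary series $\sum_{k\ge1}k\gamma^{k-1} = (1-\gamma)^{-2}$ and $\sum_{k\ge0}\gamma^{k} = (1-\gamma)^{-1}$, yielding the $c^2(1-\gamma)^{-2}$ and $c(1-\gamma)^{-1}$ weights; the initial-condition term, lacking a summation over $j$, produces instead the single factor $\sup_{t\ge1}t\gamma^{t-1}$, which is attained at the maximizer of $t \mapsto t\gamma^{t-1}$, namely $t = \tau = \round{-(\log\gamma)^{-1}}$. Collecting the operator-norm constants of the coefficient matrices into $\theta_1,\theta_2,\theta_3$ and inserting $\|\tldd_j\| \le \eta_d$ and $\|r_j\| \le B_r$ then delivers the stated bound.

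I expect the main obstacle to be the matrix-power bookkeeping in the middle step: converting asymptotic spectral-radius decay into the uniform constant $c$, and then preserving the factor $\tldK$ in front of the initial-condition and reference contributions through the telescoping, so that the first two terms scale with the gain quantization and the third with the data quantization. The optimization giving $\tau$, and in particular verifying that $\round{-(\log\gamma)^{-1}}$ upper-bounds $\sup_{t}t\gamma^{t-1}$, is the one genuinely non-routine estimate; the remainder is geometric summation and submultiplicativity of the induced norm.
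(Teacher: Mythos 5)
Your proposal follows essentially the same route as the paper's proof: variation of constants for both closed loops, the telescoping identity $\Acl^{k}-\Aclb^{k}=\sum_{i}\Acl^{k-1-i}(\sfB\tldK\sfC_{1})\Aclb^{i}$ to extract the factor $\tldK$, the uniform bound $\|\Acl^{k}\|,\|\Aclb^{k}\|\le c\gamma^{k}$, geometric summation, and maximization of $t\gamma^{t-1}$ at $\tau=\round{-(\log\gamma)^{-1}}$. Your write-up is in fact slightly more explicit than the paper's (which passes directly to the double-sum bound without spelling out the telescoping or the splitting of the reference term), but the decomposition into the three $\theta_{1},\theta_{2},\theta_{3}$ contributions is identical.
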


\begin{proof}
    See \appref{app:error}.
\end{proof}

The theorem estimates the worst-case perturbation of the output trajectory due to the encryption.
It should be noted that the upper bound can be reduced by decreasing the sensitivities because $\theta_{1}$, $\theta_{2}$, and $\theta_{3}$ decrease as $\Delta_{K}$ and $\Delta_{d}$ decrease.
Moreover, the smaller $\gamma$ is, the smaller the upper bound becomes.
This implies that the error caused by the encryption can be decreased by making the closed-loop system more stable.
Note that it is difficult to cancel the quantization errors completely because the sensitivities cannot become zero.
This is an open problem, and thus we need further development of encoding in encrypted control.

\section{Numerical Simulation}\label{sec:simulation}

This section demonstrates the feasibility of the proposed scheme using the quadruple-tank process in~\cite{Johansson00} modified to add process and measurement noises.
The model of the form \eqref{eq:plant} of the process, which is linearized around the points $h_{1}^{0}=12.4$~cm, $h_{2}^{0}=12.7$~cm, $h_{3}^{0}=1.8$~cm, $h_{4}^{0}=1.4$~cm, $v_{1}^{0}=3$~V, $v_{2}^{0}=3$~V, $\gamma_{1}=0.7$, $\gamma_{2}=0.6$ and discretized with the sampling period of $1$~s, is given as
\begin{align*}
    A_{p}&=
    \begin{bmatrix}
        0.9842 & 0      & 0.0407 & 0      \\
        0      & 0.9890 & 0      & 0.0326 \\
        0      & 0      & 0.9590 & 0      \\
        0      & 0      & 0      & 0.9672
    \end{bmatrix}, \\
    B_{p}&= 
    \begin{bmatrix}
        0.0826 & 0.0010 \\
        0.0005 & 0.0625 \\
        0      & 0.0469 \\
        0.0307 & 0
    \end{bmatrix},\ 
    C_{p}=
    \begin{bmatrix}
        0.5 & 0   & 0 & 0 \\
        0   & 0.5 & 0 & 0
    \end{bmatrix},
\end{align*}
where $x_{i}=h_{i}-h_{i}^{0}$, $u_{i}=v_{i}-v_{i}^{0}$, $h_{i}$ is a water level of the tank~$i$, $v_{i}$ is a voltage applied to the pump~$i$, and the model parameters are as follows.
The cross sections of the tanks are $A_{1}=A_{3}=28$~cm$^{2}$ and $A_{2}=A_{4}=32$~cm$^{2}$.
The cross sections of the outlet holes are $a_{1}=a_{3}=0.071$~cm$^{2}$ and $a_{2}=a_{4}=0.057$~cm$^{2}$.
The output gain is $k_{c}=0.5$~V/cm, and the input gains are $k_{1}=3.33$~cm$^{3}$/Vs and $k_{2}=3.35$~cm$^{3}$/Vs.
The gravitational acceleration is $981$~cm/s$^{2}$.
The process noise $w$ and measurement noise $v$ follow the Gaussian distribution with mean zero and variance $10^{-3}$.

The parameters in \eqref{eq:controller} of the decentralized PI controller~\cite{Johansson00} used to control the process are given as
\begin{align*}
    A&=
    \begin{bmatrix}
        1 & 0 \\
        0 & 1
    \end{bmatrix},\ 
    B=
    \begin{bmatrix}
        -1 &  0 \\
         0 & -1
    \end{bmatrix},\ 
    C=
    \begin{bmatrix}
        0.1 & 0 \\
        0   & 0.0675
    \end{bmatrix}, \\
    D&=
    \begin{bmatrix}
        -3.0 &  0 \\
         0   & -2.7
    \end{bmatrix},\ 
    E=
    \begin{bmatrix}
        1 & 0 \\
        0 & 1
    \end{bmatrix},\ 
    F=
    \begin{bmatrix}
        3.0 & 0 \\
        0   & 2.7
    \end{bmatrix},
\end{align*}
where the proportional gains are $K_{1}=3.0$ and $K_{2}=2.7$, the integral times are $T_{i1}=30$ and $T_{i2}=40$, and the controller is also discretized with the sampling period.
The gain of the corresponding IOHFC is obtained as
\begin{align*}
    K=
    &\left[\begin{matrix}
        -1.45 \!&\!  0      \!&\! -1.4 \!&\!  0      \!&\! 3.0 \!&\! 0   \!&\! 1.45 \\
         0    \!&\! -1.3163 \!&\!  0   \!&\! -1.2825 \!&\! 0   \!&\! 2.7 \!&\! 0
    \end{matrix}\right. \\
    &\left.\begin{matrix}
        0      \!&\! 1.4 \!&\! 0      \!&\! -3.0 \!&\!  0   \!&\! 0.5 \!&\! 0   \!&\! 0.5 \!&\! 0   \\
        1.3163 \!&\! 0   \!&\! 1.2825 \!&\!  0   \!&\! -2.7 \!&\! 0   \!&\! 0.5 \!&\! 0   \!&\! 0.5
    \end{matrix}\right],
\end{align*}
where the data length is $L=2$.

The parameters of the BFV encryption are chosen according to the recommendation of Homomorphic Encryption Standardization\footnote{\url{https://homomorphicencryption.org/standard/}} to satisfy $\lambda=128$~bit security; specifically, the degree of the polynomical ring is $N=4096$, the plaintext modulus $T$ is a $25$~bit prime, and the ciphertext modulus $Q$ is a $109$~bit integer.
Additionally, the right-hand side of \eqref{eq:Delta_K} is calculated as $5.0740\times 10^{-4}$, and thus we choose $\Delta_{K}=2\times10^{-4}$ and $\Delta_{d}=1\times10^{-3}$.
From \thmref{thm:error}, the worst-case output error caused by the encryption with $x_{0}=[1\ 1\ 1\ 1]^{\top}$ and $B_{r}=0.7071$ is bounded by $9.7985$, where $\gamma=0.9797$, $c=16.2783$, and $\tau=49$.

\figref{fig:performance} depicts the results of the unencrypted and encrypted decentralized PI controls with the IOHFC representation.
The dashed black lines are reference inputs.
The blue and red lines are outputs of unencrypted and encrypted controls, respectively.
The initial state of the process is $x_{0}=[1\ 1\ 1\ 1]^{\top}$.
The reference inputs are set to $[0\ 0]^{\top}$ from $0$~s to $600$~s and switched between $[0.5\ 0.5]^{\top}$ and $[-0.5\ -0.5]^{\top}$ every $200$~s from $600$~s to $1400$~s.
It should be noted that the corresponding water levels $(h_{1},h_{2})$ are $(12.4,12.7)$, $(13.4,13.7)$, and $(11.4,11.7)$.
The results show that the encrypted control inherits the stability of unencrypted control, and the outputs of encrypted control as well as those of unencrypted control track the reference inputs.
In addition, \figref{fig:error} shows the $\ell_{2}$ norm of the output error between the unencrypted and encrypted controls with $v_{t}=w_{t}=0$.
The same reference inputs as \figref{fig:performance} are used in this simulation.
The maximum error of this result is $0.007$~cm, and the result demonstrates that the performance degradation due to controller encrypion is sufficiently small.

Finally, the computation times of the BFV encryption are shown in~\tabref{tab:time}.
The minimum, average, and maximum times were calculated with $100000$ times of measurements.
All the experiments are conducted using MacBook Pro (macOS Monterey, $2.3$~GHz quad-core Intel Core i7, $32$~GB $3733$~MHz LPDDR4X).
It should be noted that the computation times of $\Mult$ include those of $\Relin$.
From \tabref{tab:num_alg} and \tabref{tab:time}, the total average times in each time step for the operator, plant, and controller are about $1.28$, $2.87$, $16.45$~ms, respectively.
Thus, the total computation time is within the sampling time.
This result suggests that the proposed method can be applied to practical real-time systems.

\begin{figure}[!t]
    \centering
    \subfigure[Water level in tank~$1$.]{\includegraphics[scale=.95]{./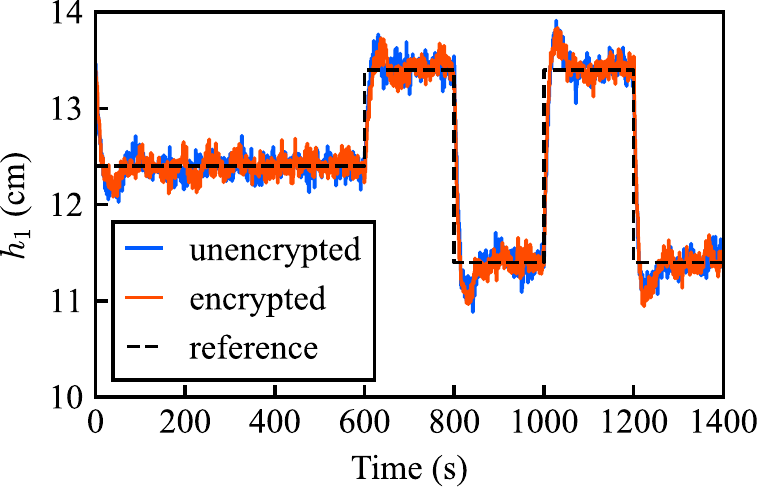}\label{fig:h1}}
    \subfigure[Water level in tank~$2$.]{\includegraphics[scale=.95]{./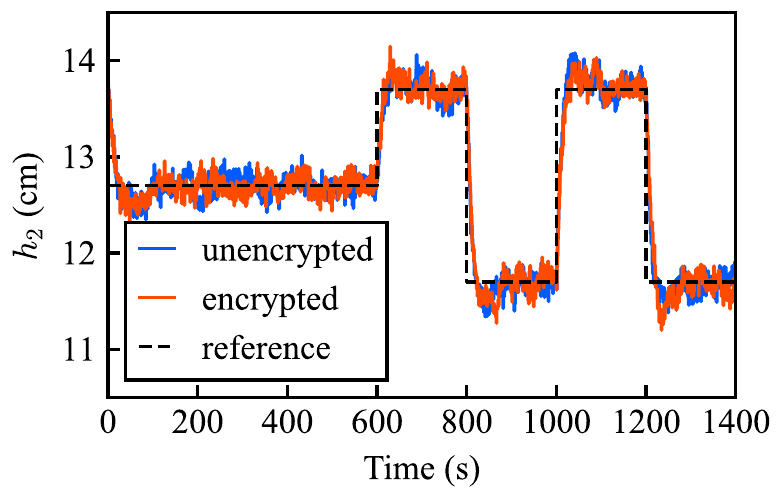}\label{fig:h2}}
    \caption{Comparison between the unencrypted and encrypted controls with the IOHFC.}
    \label{fig:performance}
\end{figure}

\begin{figure}[!t]
    \centering
    \includegraphics[scale=.95]{./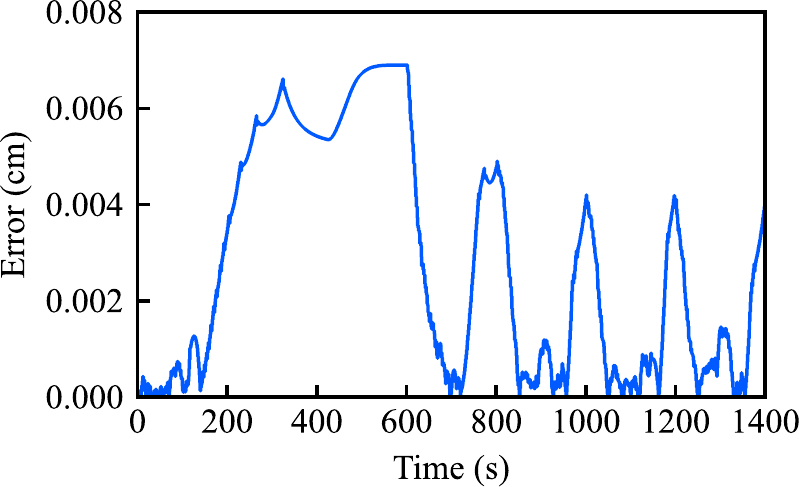}
    \caption{Performance degradation due to encryption.}
    \label{fig:error}
    \vspace{-5mm}
\end{figure}

\begin{table}[!t]
    \centering
    \caption{Computation Times}
    \begin{tabular}{cccccccc}
        \toprule
                     & $\sigma$ & $\sigma^{-1}$ & $\Enc$ & $\Dec$ & $\Mult$ & $\Add$ & $\Rotate$ \\
        \midrule
        Min (ms)     & $0.03$   & $0.04$        & $1.17$ & $0.26$ & $4.09$  & $0.01$ & $0.63$   \\
        Ave (ms)     & $0.03$   & $0.04$        & $1.25$ & $0.27$ & $4.35$  & $0.01$ & $0.66$   \\
        Max (ms)     & $0.38$   & $0.44$        & $4.99$ & $0.78$ & $10.4$  & $0.09$ & $1.41$   \\
        Std ($\mu$s) & $5.47$   & $6.99$        & $72.6$ & $22.7$ & $208$   & $2.40$ & $42.9$   \\
        \bottomrule
    \end{tabular}
    \label{tab:time}
    \vspace{-5mm}
\end{table}

\section{Conclusion}\label{sec:conclusion}

This paper presented a novel controller representation based on input-output history data of a dynamic controller to implement an encrypted controller with leveled fully homomorphic encryption.
The proposed encrypted control scheme and algorithm enable the controller to operate for an infinite time horizon without temporary decryption of the encrypted controller states.
The BFV homomorphic encryption scheme with the CRT batching improves the efficiency of matrix-vector multiplication, which is included in the proposed algorithm.

We also estimated the worst-case performance degradation caused by the quantization effects due to encryption.
The numerical simulation demonstrates the feasibility of the proposed encrypted control with a small performance degradation by choosing the appropriate parameters.
Furthermore, the simulation results disclose that the derived theoretical estimate is slightly conservative, and so we will further analyze the effects of encryption on control performance.

Future work includes the consideration of an IOHFC representation for more complex controllers, such as nonlinear controllers.
One possible way to realize a nonlinear IOHFC is use of the Koopman operator~\cite{Koopman31}, which lifts a finite-dimensional nonlinear system to an infinite-dimensional linear system.
The approach with appropriate truncation of a system dimension would enable application of the proposed scheme to nonlinear controllers.

\appendix

\subsection{BFV encryption scheme}\label{app:BFV}

The RLWE problem is defined as follows~\cite{Lyubashevsky10}.

\begin{definition}[RLWE]
    Given a security parameter $\lambda$.
    Let $Q=Q(\lambda)\ge2$ be an integer, $M=M(\lambda)$ be a power of $2$, and $\chi=\chi(\lambda)$ be a distribution over $R=\Z[X]/(\Phi_{M(\lambda)}(X))$, where $\Phi_{M}(X)$ is the $M$th cyclotomic polynomial.
    Sample $\bfs\gets R_{Q}$ randomly, and define the distribution $D(\bfs,Q,\chi)$ obtained by outputting $([\bfa\bfs+\bfe]_{Q},\bfa)$ with random sampling $\bfa\gets R_{Q}$ and $\bfe\gets\chi$.
    The RLWE problem is to distinguish between $D(Q,\bfs,\chi)$ and the uniform distribution over $R_{Q}^{2}$.
    The RLWE assumption is the assumption that the distributions are computationally indistinguishable.
\end{definition}

The BFV leveled fully homomorphic encryption~\cite{Fan12} is constructed based on the RLWE assumption as follows:

\begin{itemize}
    \item $\KeyGen(\lambda)$:
    Choose $N(\lambda)$, $Q(\lambda)$, $T(\lambda)$, $W(\lambda)$, and $\chi(\lambda)$.
    Randomly sample $\bfs\gets R_{2}$, $\bfa\gets R_{Q}$, $\bfe\gets\chi$, $\bfa'_{i}\gets R_{Q}$ and $\bfe'_{i}\gets\chi$ for $0\le i\le\ell=\floor{\log_{W}(Q)}$.
    Set
    \begin{align*}
        \sk&=\bfs,\quad\pk=([-(\bfa\bfs+\bfe)]_{Q},\bfa), \\
        \rlk&=\left[([-(\bfa'_{i}\bfs+\bfe'_{i})+W^{i}\bfs^{2}]_{Q},\bfa'_{i})\mid 0\le i\le\ell\right].
    \end{align*}
    Output $(\sk,\pk,\rlk)$.
    
    \item $\Enc(\pk,\bfm)$:
    A plaintext space is $R_{T}$.
    Let $\Delta=\floor{Q/T}$, $\bfp_{0}=\pk[0]$, and $\bfp_{1}=\pk[1]$.
    Randomly sample $\bfu\gets R_{2}$, and $\bfe_{0},\bfe_{1}\gets\chi$.
    Output
    \[
        \ct=([\bfp_{0}\bfu+\bfe_{0}+\Delta\bfm]_{Q},[\bfp_{1}\bfu+\bfe_{1}]_{Q}).
    \]
    
    \item $\Dec(\sk,\ct)$:
    A ciphertext space is $R_{Q}^{2}$.
    Let $\bfc_{0}=\ct[0]$, $\bfc_{1}=\ct[1]$, and $\bfs=\sk$.
    Output
    \[
        \bfm=[\round{(T/Q)[\bfc_{0}+\bfc_{1}\bfs]_{Q}}]_{T}.
    \]
    
    \item $\Add(\ct_{1},\ct_{2})$:
    Let $\bfc_{10}=\ct_{1}[0]$, $\bfc_{11}=\ct_{1}[1]$, $\bfc_{20}=\ct_{2}[0]$, and $\bfc_{21}=\ct_{2}[1]$.
    Output
    \[
        \ct_{\Add}=([\bfc_{10}+\bfc_{20}]_{Q},[\bfc_{11}+\bfc_{21}]_{Q}).
    \]
    
    \item $\Mult(\ct_{1},\ct_{2})$:
    Let $\bfc_{10}=\ct_{1}[0]$, $\bfc_{11}=\ct_{1}[1]$, $\bfc_{20}=\ct_{2}[0]$, and $\bfc_{21}=\ct_{2}[1]$.
    Compute
    \begin{align*}
        \bfc_{0}&=[\round{(T/Q)(\bfc_{10}\bfc_{20})}]_{Q}, \\
        \bfc_{1}&=[\round{(T/Q)(\bfc_{10}\bfc_{21}+\bfc_{11}\bfc_{20})}]_{Q}, \\
        \bfc_{2}&=[\round{(T/Q)(\bfc_{11}\bfc_{21})}]_{Q}.
    \end{align*}
    Output $\ct_{\Mult}=(\bfc_{0},\bfc_{1},\bfc_{2})$.
    
    \item $\Relin(\rlk,\ct_{\Mult})$:
    Let $\bfc_{0}=\ct_{\Mult}[0]$, $\bfc_{1}=\ct_{\Mult}[1]$, and $\bfc_{2}=\ct_{\Mult}[2]$.
    Let $\bfr_{i0}=\rlk[i][0]$ and $\bfr_{i1}=\rlk[i][1]$ for $0\le i\le\ell$.
    Write $\bfc_{2}=\sum_{i=0}^{\ell}\bfc_{2}^{(i)}W^{i}$ with $\bfc_{2}^{(i)}\in R_{W}$, where $W$ is totally independent of $T$.
    Output
    \[
        \ct=\left(\left[\bfc_{0}+\sum_{i=0}^{\ell}\bfr_{i0}\bfc_{2}^{(i)}\right]_{Q},\left[\bfc_{1}+\sum_{i=0}^{\ell}\bfr_{i1}\bfc_{2}^{(i)}\right]_{Q}\right).
    \]
\end{itemize}

\subsection{Proof of \lemref{lem:stability}}\label{app:stability}

\begin{proof}
    We prove only the sufficient condition because the proof for the necessary condition is trivial.
    Given the controller \eqref{eq:controller} that stabilizes \eqref{eq:plant}, then $x_{t}\to 0$ and $u_{t}\to 0$ as $t\to\infty$.
    \thmref{theorem:iohfc} implies that the IOHFC \eqref{eq:controller_data} is equivalent to \eqref{eq:controller} for all $t\ge 0$.
    Thus, the sequences $\{x_{t}\}_{t=0}^{\infty}$ and $\{u_{t}\}_{t=0}^{\infty}$ generated by \eqref{eq:plant_data} with \eqref{eq:controller_data} are the same as those in \eqref{eq:plant} with \eqref{eq:controller}.
    This concludes state $\sfx$ of \eqref{eq:plant_data} with \eqref{eq:controller_data} converges to zero as $t\to\infty$.
\end{proof}

\subsection{Proof of \thmref{thm:lyapunov}}\label{app:lyapunov}

\begin{proof}
    The closed-loop system with \eqref{eq:plant_data} and \eqref{eq:qtz_controller_gain} is expressed as follows:
    \[
        \sfx_{t+1}=\sfA\sfx_{t}+\sfB\bar{K}\sfC_{1}\sfx_{t}=(\sfA+\sfB K\sfC_{1})\sfx_{t}+\sfB\tldK\sfC_{1}\sfx_{t},
    \]
    where $\tldK=\bar{K}-K$.
    From \lemref{lem:stability}, there always exist positive definite matrices $P$ and $Q$ such that $(\sfA+\sfB K\sfC_{1})^{\top}P(\sfA+\sfB K\sfC_{1})-P+Q=O$.
    Let $V(\sfx_{t})=\sfx_{t}^{\top}P\sfx_{t}$ be a Lyapunov function candidate, then
    \begin{align*}
        &V(\sfx_{t+1})-V(\sfx_{t}) \\
        &=\sfx^{\top}(\sfB\tldK\sfC_{1})^{\top}P\sfB\tldK\sfC_{1}\sfx+2\sfx^{\top}(\sfA+\sfB K\sfC_{1})^{\top}P\sfB K\sfC_{1}\sfx \\
        &\quad-\sfx^{\top}Q\sfx, \\
        &\!\le\!\Big(\|\sfB^{\!\top}P\sfB\|\|\sfC_{1}\|^{2}\|\tldK\|^{2}\!+\!2\|(\sfA\!+\!\sfB K\sfC_{1})^{\!\top}P\sfB\|\|\sfC_{1}\|\|\tldK\| \\
        &\quad\quad-\lambda_{\min}(Q)\Big)\|\sfx\|^{2}\eqqcolon g(\|\tldK\|).
    \end{align*}
    The solution to the quadratic equation $g(\|\tldK\|)=0$ is
    \begin{align*}
        \|\tldK\|
        &=\cfrac{1}{\|\sfB^{\top}P\sfB\|\|\sfC_{1}\|}\Bigg(-\|(\sfA+\sfB K\sfC_{1})^{\top}P\sfB\| \\
        &\quad+\sqrt{\|(\sfA+\sfB K\sfC_{1})^{\top}P\sfB\|^{2}+\lambda_{\min}(Q)\|\sfB^{\top}P\sfB\|}\Bigg).
    \end{align*}
    Moreover, it follows from \eqref{eq:bound_K} that
    \[
        \cfrac{2}{\sqrt{(L+1)(q+\ell)m+Lm^{2}}}\|\tldK\|\le\Delta_{K}.
    \]
    Therefore, $g(\|\tldK\|)<0$ if $\Delta_{K}$ satisfies \eqref{eq:Delta_K}.
    This implies that $V(\sfx_{t+1})-V(\sfx_{t})$ is negative.
\end{proof}

\subsection{Proof of \thmref{thm:error}}\label{app:error}

\begin{proof}
    Let $\Acl=\sfA+\sfB K\sfC_{1}$ and $\Aclb=\sfA+\sfB\bar{K}\sfC_{1}$.
    Because $\Acl$ and $\Aclb$ are assumed to be stable, there exists $M\ge0$ such that $\|\Acl^{k}\|<\gamma^{k}$ and $\|\Aclb^{k}\|<\gamma^{k}$ for all $k\ge M$~\cite{Dowler13}.
    Thus, $\|\Acl^{k}\|\le c\gamma^{k}$ and $\|\Aclb^{k}\|\le c\gamma^{k}$ for any $k$.
    
    It follows from \eqref{eq:plant_data}, \eqref{eq:controller_data}, and \eqref{eq:qtz_controller} that
    \begin{align*}
        y_{t}(K,\sfx_{0})&=\sfC_{2}\Acl^{t}\sfx_{0}+\sum_{k=0}^{t-1}\sfC_{2}\Acl^{k}(\sfE+\sfB K\sfF)r_{t-1-k}, \\
        y'_{t}(\bar{K},\sfx_{0})&=\sfC_{2}\Aclb^{t}\sfx_{0}+\sum_{k=0}^{t-1}\sfC_{2}\Aclb^{k}(\sfE+\sfB \bar{K}\sfF)r_{t-1-k} \\
        &\quad+\sum_{k=0}^{t-1}\sfC_{2}\Aclb^{k}\sfB\bar{K}\tldd_{t-1-k}.
    \end{align*}
    Hence, the supremum of the error is bounded by
    \begin{align*}
        &\sup_{t>0}\|y_{t}(K,\sfx_{0})-y'_{t}(\bar{K},\sfx_{0})\| \\
        &\le\sup_{t>0}\left\|\sum_{k=0}^{t-1}\Acl^{t-1-k}\Aclb^{k}\right\|\theta_{1}+\sup_{t>0}\left\|\sum_{k=1}^{t-1}\sum_{j=0}^{k-1}\Acl^{k-1-j}\Aclb^{j}\right\|\theta_{2} \\
        &\quad+\sup_{t>0}\left\|\sum_{k=0}^{t-1}\Aclb^{k}\right\|\theta_{3}, \\
        &\le\sup_{t>0}\theta_{1}c^{2}t\gamma^{t-1}+\sup_{t>0}\theta_{2}c^{2}\sum_{k=1}^{t-1}k\gamma^{k-1}+\sup_{t>0}\theta_{3}c\sum_{k=0}^{t-1}\gamma^{k}.
    \end{align*}
    For the first term of the above inequality, it follows that
    \[
        \cfrac{\partial}{\partial t}\,t\gamma^{t-1}=\gamma^{t-1}(1+t\log\gamma)=0\iff t=-(\log\gamma)^{-1},
    \]
    where $t\neq 0$.
    Furthermore, the second and third terms are respectively calculated as
    \begin{align*}
        \sup_{t>0}\theta_{2}c^{2}\sum_{k=1}^{t-1}k\gamma^{k-1}&=\theta_{2}c^{2}\sum_{k=0}^{\infty}k\gamma^{k-1}=\cfrac{\theta_{2}c^{2}}{(1-\gamma)^{2}}, \\
        \sup_{t>0}\theta_{3}c\sum_{k=0}^{t-1}\gamma^{k}&=\theta_{3}c\sum_{k=0}^{\infty}\gamma^{k}=\cfrac{\theta_{3}c}{1-\gamma},
    \end{align*}
    as $\gamma<1$.
    Therefore, we obtain
    \begin{align*}
        &\sup_{t>0}\theta_{1}c^{2}t\gamma^{t-1}+\sup_{t>0}\theta_{2}c^{2}\sum_{k=1}^{t-1}k\gamma^{k-1}+\sup_{t>0}\theta_{3}c\sum_{k=0}^{t-1}\gamma^{k} \\
        &\le\theta_{1}c^{2}\tau\gamma^{\tau-1}+\cfrac{\theta_{2}c^{2}}{(1-\gamma)^{2}}+\cfrac{\theta_{3}c}{1-\gamma}.
    \end{align*}
    This completes the proof.
\end{proof}

\bibliographystyle{IEEEtran}
\bibliography{encrypted_control_and_optimization}

\begin{IEEEbiography}[{\includegraphics[width=1in,height=1.25in,clip,keepaspectratio]{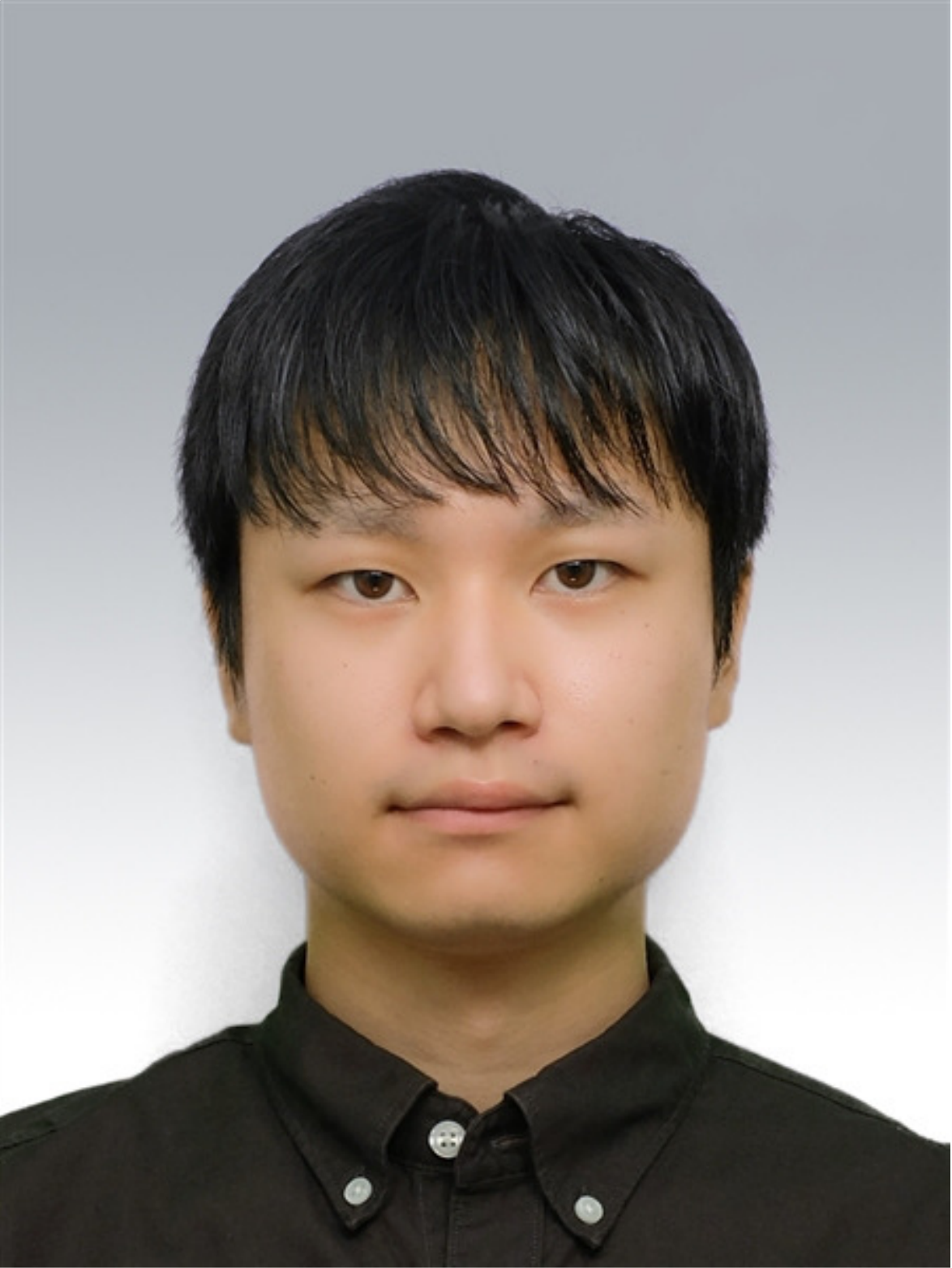}}]
    {Kaoru Teranishi} received the B.S. degree in electromechanical engineering from National Institute of Technology, Ishikawa College, Ishikawa, Japan, in 2019.
    He also obtained the M.S. degree in Mechanical and Intelligent Systems Engineering from The University of Electro-Communications, Tokyo, Japan, in 2021.
    He is currently a Ph.D. student at The University of Electro-Communications.
    From October 2019 to September 2020, he was a visiting scholar of the Georgia Institute of Technology, GA, USA.
    Since April 2021, he has been a Research Fellow of Japan Society for the Promotion of Science.
    His research interests include control theory and cryptography for cyber-security of control systems.
\end{IEEEbiography}

\begin{IEEEbiography}[{\includegraphics[width=1in,height=1.25in,clip,keepaspectratio]{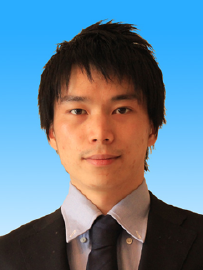}}]
    {Tomonori Sadamoto} received the Ph.D. degree from the Tokyo Institute of Technology, Tokyo, Japan in 2015. From June in 2015 to March in 2016, he was a Visiting Researcher at School of Electrical Engineering, Royal Institute of Technology, Stockholm, Sweden. From April 2016 to August 2016, he was a researcher with the Department of Systems and Control Engineering, Graduate School of Engineering, Tokyo Institute of Technology. From August 2016 to November 2018, he was a specially appointed Assistant Professor with the same department. Since November 2018, he has been assistant professor with Department of Mechanical and Intelligent Systems Engineering in the University of Electro-Communications. He was named as a finalist of the European Control Conference Best Student-Paper Award in 2014. He received Research encouragement award from The Funai Foundation for Informaiton Technology in 2019, and received IEEE Control Systems Magazine Outstanding Paper Award in 2020. 
\end{IEEEbiography}

\begin{IEEEbiography}[{\includegraphics[width=1in,height=1.25in,clip,keepaspectratio]{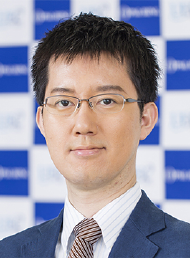}}]
    {Kiminao Kogiso} received the B.E., M.E., and Ph.D. degrees in mechanical engineering from Osaka University, Japan, in 1999, 2001, and 2004, respectively. 
    He was appointed as a postdoctoral fellow in the 21st Century COE Program and as an Assistant Professor in the Graduate School of Information Science, Nara Institute of Science and Technology, Nara, Japan, in April 2004 and July 2005, respectively.
    From November 2010 to December 2011, he was a visiting scholar at Georgia Institute of Technology, Atlanta, GA, USA.
    In March 2014, he was promoted to the position of Associate Professor in the Department of Mechanical and Intelligent Systems Engineering at The University of Electro-Communications, Tokyo, Japan. 
    And since April 2023, he has been serving as a full Professor in the same department.
    His research interests include cybersecurity of control systems, constrained control, control of decision-makers, and their applications.
\end{IEEEbiography}

\end{document}